\newtheorem{definition}{Definition}
\newtheorem{theorem}{Theorem}
\newcommand{\descr}[1]{\smallskip\noindent\textbf{#1.}} %
\renewcommand{\paragraph}{\descr}
\newcommand{\A}{\ensuremath \mathcal{A}}
\newcommand{\C}{\ensuremath \mathcal{C}}
\newcommand{\D}{\ensuremath \mathcal{D}}
\newcommand{\Z}{\ensuremath \mathbb{Z}}
\newcommand{\PRF}{\ensuremath \operatorname{PRF}}
\newcommand{\cmark}{\ding{51}}%
\newcommand{\xmark}{\ding{55}}%
\begin{document}

\bstctlcite{IEEEexample:BSTcontrol}

\title{HMOG: New Behavioral Biometric Features for Continuous Authentication of Smartphone Users*}

\author{Zde\v nka Sitov\'a$^\dag$~~
        Jaroslav \v Sed\v enka$^\dag$~~
        Qing Yang$^{\ddag\#}$\thanks{*This paper was published as~\cite{7349202}
        
\smallskip
$^\#$Work done in part while visiting the New York Institute of Technology.
        
\smallskip
$^\flat$Corresponding author. Mailing address: Room HSH 226C, Northern Blvd., NY 11568, USA. Phone: 561-686-1373. Fax: 516-686-7933.

\smallskip
Zde\v nka Sitov\'a and Jaroslav \v Sed\v enka were students at Masaryk University, Czech Republic. This work was done while visiting NYIT.

\smallskip
This work was supported in part by DARPA Active Authentication grant FA8750-13-2-0266 and a 2013 NYIT ISRC grants. The views, findings, recommendations, and conclusions contained herein are those of the authors and should not be interpreted as necessarily representing the official policies or endorsements, either expressed or implied, of the sponsoring agencies or the U.S. Government.

\smallskip
Zhou was supported in part by NSF CAREER grant CNS-1253506.

\smallskip
Computing and storage resources were provided in part under program ``Projects of Large Infrastructure for Research, Development, and Innovations'' (LM2010005).

\smallskip
Copyright (c) 2013 IEEE. Personal use of this material is permitted. However, permission to use this material for any other purposes must be obtained from the IEEE by sending a request to pubs-permissions@ieee.org
}~~
        Ge Peng$^\ddag$~~
        Gang Zhou$^\ddag$~~
        Paolo Gasti$^\dag$~~
        Kiran S.~Balagani$^{\dag \flat}$ \\
        \medskip $^\dag$New York Institute of Technology  --  {\tt \small \{sitovaz, sedenka\}@mail.muni.cz}, {\tt \small \{pgasti, kbalagan\}@nyit.edu}  \\
        $^\ddag$College of William and Mary -- {\tt \small \{qyang, gpeng, gzhou\}@cs.wm.edu}
}

\maketitle

\begin{abstract}
We introduce {\em Hand Movement, Orientation, and Grasp} (HMOG), a set of behavioral features to continuously authenticate smartphone users. HMOG features unobtrusively capture subtle micro-movement and orientation dynamics resulting from how a user grasps, holds, and taps on the smartphone. We evaluated authentication and biometric key generation (BKG) performance of HMOG features on data collected from 100 subjects typing on a virtual keyboard. Data was collected under two conditions: sitting and walking. We achieved authentication EERs as low as 7.16\% (walking) and 10.05\% (sitting) when we combined HMOG, tap, and keystroke features. We performed experiments to investigate why HMOG features perform well during walking. Our results suggest that this is due to the ability of HMOG features to capture distinctive body movements caused by walking, in addition to the hand-movement dynamics from taps. With BKG, we achieved EERs of 15.1\% using HMOG combined with taps. In comparison, BKG using tap, key hold, and swipe features had EERs between 25.7\% and 34.2\%. We also analyzed the energy consumption of HMOG feature extraction and computation.
Our analysis shows that HMOG features extracted at 16Hz sensor sampling rate incurred a minor overhead of 7.9\% without sacrificing authentication accuracy.
Two points distinguish our work from current literature: 1) we present the results of a comprehensive evaluation of three types of features (HMOG, keystroke, and tap) and their combinations under the same experimental conditions; and 2) we analyze the features from three perspectives (authentication, BKG, and energy consumption on smartphones).

\end{abstract}

\begin{keywords}
Behavioral biometrics, continuous authentication, biometric key generation, energy evaluation, HMOG.
\end{keywords}

\IEEEpeerreviewmaketitle

\section{Introduction}

Currently, popular smartphone authentication mechanisms such as PINs, graphical passwords, and fingerprint scans offer limited security. They are susceptible to guessing~\cite{pin_distribution} (or spoofing~\cite{iphone_fingerprint_spoof} in the case of fingerprint scans), and to side channel attacks such as smudge~\cite{AvivGMBS10}, reflection~\cite{XuH0MF13}, and video capture~\cite{ShuklaKSP14} attacks.
Additionally, a fundamental limitation of PINs, passwords, and fingerprint scans is that they are well-suited for one-time authentication, 
and therefore are commonly used to authenticate users at login. This renders them ineffective when the smartphone is accessed by an adversary after login. {\em Continuous}  {or \em active} authentication addresses these challenges by frequently and unobtrusively authenticating the user via behavioral biometric signals, such as touchscreen interactions~\cite{frank2013}, hand movements and gait~\cite{derawi2010, bo2013}, voice~\cite{LuBPKL11}, and phone location~\cite{shi2010}.

In this paper, we present Hand Movement, Orientation, and Grasp (HMOG), a new set of behavioral biometric features for continuous authentication of smartphone users. HMOG uses accelerometer, gyroscope, and magnetometer readings to unobtrusively capture subtle hand micro-movements and orientation patterns generated when a user taps on the screen. 

HMOG features are founded upon two core building blocks of human prehension~\cite{Napier56}: {\em stability grasp}, which provides stability to the object being held; and {\em precision grasp}, which involves precision-demanding tasks such as tapping a target. We view the act of \textit{holding a phone} as a stability grasp and the act of \textit{touching targets on the touchscreen} as a precision grasp. We hypothesize that the way in which a user ``distributes'' or ``shares'' stability and precision grasps while interacting with the smartphone results in distinctive movement and orientation behavior. The rationale for our hypothesis comes from the following two bodies of research.

First, there is evidence~(see \cite{Karlson06,Azenkot2012,Wobbrock2008}) that users have postural preferences for interacting with hand-held devices such as smartphones. Depending upon the postural preference, it is possible that the user can have her own way of achieving stability and precision---for example, the user can achieve both stability and precision with one hand if the postural preference involves holding and tapping the phone with the same hand; or distribute stability and precision between both hands, if the posture involves using both hands for holding and tapping; or achieve stability with one hand and precision with the other. %

Second, studies in ergonomics, biokinetics, and human-computer interaction have reported that handgrip strength strongly correlates with an individual's physiological and somatic traits like hand length, handedness, age, gender, height, body mass, and musculature (see, e.g.,  \cite{fiebert1998relationship,chatterjee1991comparison,kim2006hand}). If the micro-movements caused by tapping reflect an individual's handgrip strength, then the distinctiveness of HMOG may have roots, at least in part, in an individual's distinctive physiological and somatic traits.

Motivated by the above, we designed 96 HMOG features and evaluated their {\em continuous user authentication} and {\em biometric key generation} performance during typing. Because walking has been shown to affect typing performance~\cite{MizobuchiCN05}, we evaluated HMOG under both walking and sitting conditions.

\subsection{Contributions and Novelty of This Work}

\paragraph{New HMOG Features for Continuous Authentication} 
We propose two types of HMOG features: {\em resistance features}, which measure the micro-movements of the phone in response to the forces exerted by a tap gesture; and {\em stability features}, which measure how quickly the perturbations in movement and orientation, caused by tap forces, dissipate. %
Our extensive evaluation of HMOG features on a dataset  of 100 users\footnote{We made the dataset available at \url{http://www.cs.wm.edu/~qyang/hmog.html}. We also described the data and its release in \cite{qingPoster}.} who typed on the smartphone led to the following findings: (1)~HMOG features extracted from accelerometer and gyroscope signals outperformed HMOG features from magnetometer;
(2) Augmenting HMOG features with tap characteristics (e.g., tap duration and contact size) lowered equal error rates (EERs): from 14.34\% to 11.41\% for sitting, and from 14.73\% to 8.53\% for walking.
This shows that combining tap information with HMOG features considerably improves authentication performance; and (3) HMOG features complement tap and keystroke dynamics features, especially for low authentication latencies at which tap and keystroke dynamics features fare poorly. For example, for 20-second authentication latency, adding HMOG to tap and keystroke dynamics features reduced the equal error rate from 17.93\% to 11.74\% for walking and from 19.11\% to 15.25\% for sitting.

\paragraph{Empirical Investigation Into Why HMOG Authentication Performs Well During Walking}  HMOG features achieved lower authentication errors (13.62\% EER) for walking compared to sitting (19.67\% EER). We investigated why HMOG had a superior performance during walking by comparing the performance of HMOG features {\em during} taps and {\em between} taps (i.e., the segments of the sensor signal that lie between taps). Our results suggest that the higher authentication performance %
during walking can be attributed to the ability of HMOG features to capture distinctive 
movements caused by walking in addition to micro-movements caused by taps. %

\paragraph{BKG with HMOG Features} 
BKG is closely related to authentication, but has a different objective: to 
provide cryptographic access control to sensitive data on the smartphone. We 
believe that designing a secure BKG scheme on smartphones is very important, 
because the adversary is usually assumed to have physical access to the device, 
and therefore cryptographic keys must not be stored on the smartphone's memory---but rather generated from biometric signals and/or passwords.

To our knowledge, we are the first to evaluate BKG on smartphones. We instantiated BKG using normalized generalized Reed-Solomon codes in Lee metric. (See section~\ref{sec:bkg} for formulation and evaluation.) We compared BKG on HMOG to BKG on tap, key hold, and swipe features under two metrics: equal error rate (EER) and guessing distance. 

Our results on BKG can be summarized as follows: we achieved lower EERs with HMOG features compared to key hold, tap, and swipe features in both walking and sitting conditions. For walking, EER of HMOG-based BKG was 17\%, vs. 29\% with key hold and 28\% with tap features. By combining HMOG and tap features, we achieved 15.1\% EER. For sitting, we obtained an EER of 23\% with HMOG features, 26\% with tap features, and 20.1\% by combining both. In contrast, we obtained 34\% EER with swipes. HMOG features also provided higher guessing distance (i.e., 2.9 for walking, and 2.8 for sitting) than all other features extracted from our dataset (1.9 for taps and for key holds in walking and 1.6 for taps in sitting conditions).

\paragraph{Energy Consumption Analysis of HMOG Features}
Because smartphones are energy constrained, it is crucial that a continuous user authentication method consumes as little energy as possible, while maintaining the desired level of authentication performance. To evaluate the feasibility of HMOG features for continuous authentication on smartphones, we measured the energy consumption of accelerometer and gyroscope, sampled at 100Hz, 50Hz, 16Hz and 5Hz. We then measured the energy required for HMOG feature computation from sensor signals, and reported the tradeoff between energy consumption and EER.

Our analysis shows that a balance between authentication performance and energy overhead can be achieved by sampling HMOG features at 16Hz. The energy overhead with 16Hz is 7.9\%, compared to 20.5\% with 100Hz sampling rate, but comes with minor increase (ranging from 0.4\% to 1.8\%) in EERs. However, by further reducing the sampling rate to 5Hz, we observed a significant increase in EER (11.0\% to 14.1\%). %

\subsection{Organization}
We present the description of HMOG features in Section \ref{sectionFeatures}, and details on our dataset in Section \ref{sectionDataset}. In sections \ref{sectionExperiments} and \ref{sectionResults}, we describe the authentication experiments and present results. We introduce and evaluate BKG on HMOG in Section \ref{sec:bkg}. We analyze the energy consumption of HMOG features in Section \ref{sec:power}. In Section \ref{relatedResearch}, we review related research.  We conclude in Section \ref{sec:conclusion}.

\section{Description of HMOG features} \label{sectionFeatures}

We define two types of HMOG features: grasp {\em resistance} and grasp {\em stability}. These features are computed from data collected using three sensors: accelerometer, gyroscope, and magnetometer. Because HMOG features aim to capture the subtle micro-movements and orientation patterns of a user while tapping on the screen, we extract HMOG features from signals collected {\em during} or {\em close to} tap events.  
Computation of grasp stability and resistance features is discussed next.

\subsection{Grasp Resistance Features} 
Grasp resistance features measure the resistance of a hand grasp to the forces (or pressures) exerted by touch/gesture events. We quantify resistance as the change, or perturbation, in movement (using readings from accelerometer), orientation (from gyroscope) and magnetic field (from magnetometer), caused by a tap event.

\begin{table}[htbp]  \caption{Notation.}\label{notation-1}

  \centering \resizebox{.5\textwidth}{!}{
  \begin{tabular}{@{} |c|l| @{}}
    \hline
Sensor & Accelerometer or Gyroscope or Magnetometer. \\\hline   
 $X$, $Y$, $Z$ & Time series of sensor readings in $x$, $y$, and $z$ axes respectively\\ \hline
 $Z_1, \ldots, Z_n$ & Individual sensor readings in $z$ axis collected at time \\ &$t_1, \ldots, t_n$ respectively \\\hline
    $M$ & Time series of magnitude of sensor reading, where each \\ &element $M_i$ is computed as $\sqrt{(X_i^2+Y_i^2+Z_i^2)}$ \\ \hline
    $t_{start}$ & Start time of a tap event \\ \hline
    $t_{end}$ & End time of a tap event \\ \hline
    $t_{max\_in\_tap}$ & Time between $t_{start}$ and $t_{end}$ at which the reading from a \\&sensor   reaches its highest value \\ \hline
    $t_{min}$ & Time when stability is achieved after the tap event has ended \\ \hline
    $t_{before\_center}$ & Center of the 100 ms window before a tap \\ \hline
    $t_{after\_center}$ & Center of the 100 ms window after a tap \\ \hline
    \texttt{avg100msBefore} & Average of sensor readings in a 100~ms  window before start \\ \texttt{avg100msAfter}&  time and after end time, respectively\\ \hline
    \texttt{avgTap} & Average of readings during tap events \\ \hline
    $t_{min}$ & Time when stability is achieved after the tap event has ended\\
    \hline
  \end{tabular}}
\end{table}

We extracted five grasp resistance features from accelerometer, gyroscope, and magnetometer, over four dimensions (magnitude, $x$, $y$, and $z$ axes), leading to $5 \times 3 \times 4 = 60$ features. For simplicity of exposition, we describe grasp resistance features only on the $z$ axis. We also extracted the same features from $X$, $Y$, and $M$. Our notation is summarized in Table~\ref{notation-1}. Figure~\ref{fig:drawingStabilityFeatures} illustrates variables used in features 3 through 5.
\begin{enumerate}
\item Mean of $Z$ during taps. %
\item Standard deviation of $Z$ during taps.
\item Difference in $Z$  readings before and after a tap event. Let \texttt{avg100msBefore} be the average of $Z$ readings in a 100~ms  window before tap start time, and \texttt{avg100msAfter} be the average of $Z$ readings in a 100~ms  window after tap end time. We calculated this feature as the difference between \texttt{avg100msAfter} and \texttt{avg100msBefore}.
\item Net  change in $Z$ readings caused by a tap. Let  \texttt{avgTap} be the average of $Z$ readings during a tap event. We calculate this feature as  \texttt{avgTap - avg100msBefore}.
\item Maximum change in $Z$ readings caused by a tap. Let \texttt{maxTap} be the maximum $Z$ reading during a tap event. This feature is calculated as 
 \texttt{maxTap - avg100msBefore}.
\end{enumerate}

\subsection{Grasp Stability Features} 
Stability features quantify how quickly the perturbations caused by a finger-force from a tap event disappear after the tap event is complete.
We compute grasp stability features as follows:
(Figure~\ref{fig:drawingStabilityFeatures} illustrates variables used in the features.)

\begin{figure}[tb]
\includegraphics[width=0.96\linewidth]{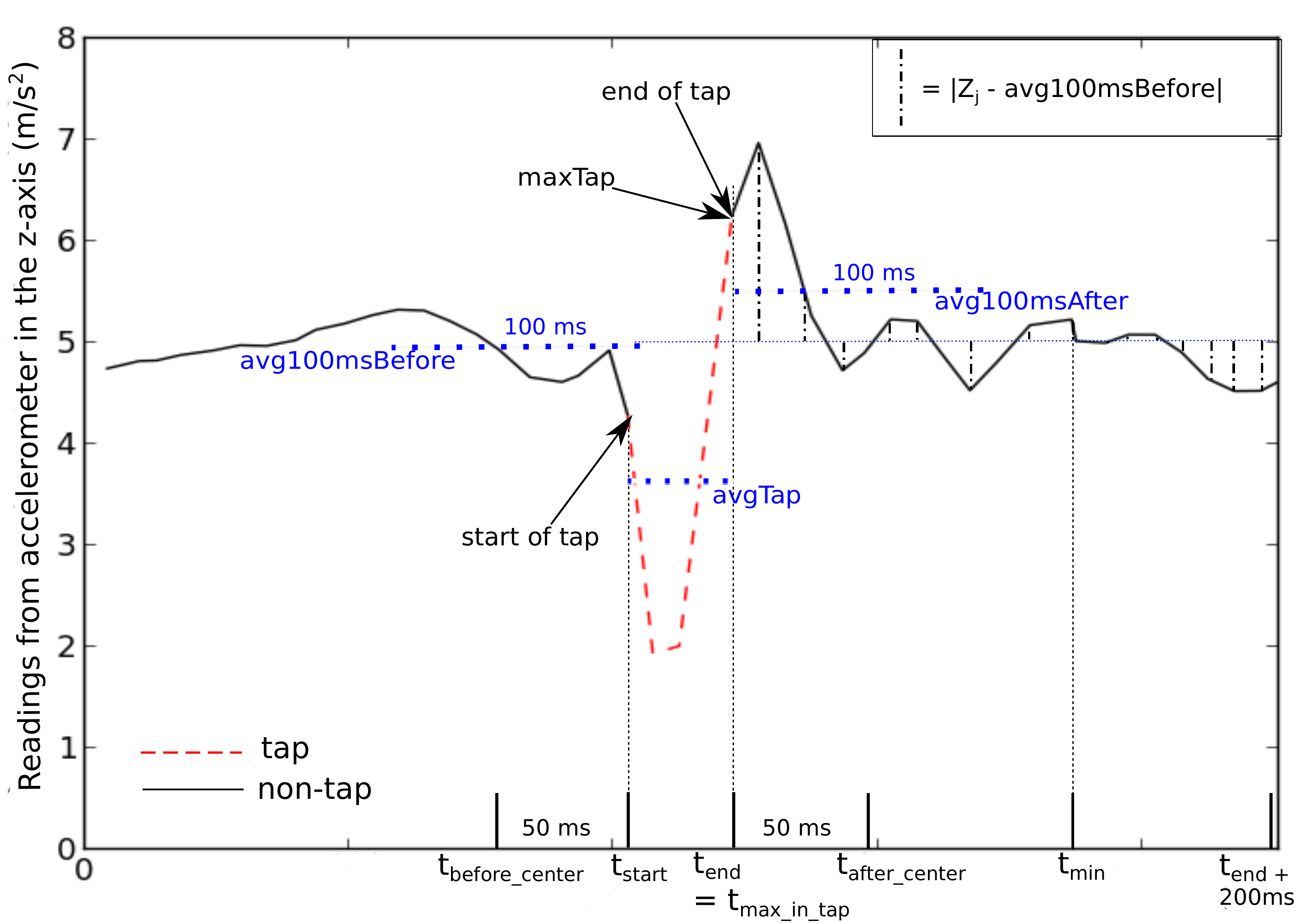}
\caption[]{Illustration of key variables for computing grasp {\em resistance} features 3-5 and grasp {\em stability} features 1-3.}
\label{fig:drawingStabilityFeatures}
\end{figure}

\begin{enumerate}
\item Time duration to achieve movement and orientation stability after a 
tap event. Let $t_{end}$ denote the end time of the tap event, and $t_{min}$ the 
time when stability is achieved after the tap event has ended, computed as shown in Algorithm~\ref{alg:tmin}. This feature is 
calculated as $t_{min} - t_{end}$.
\begin{algorithm}[h!]
\footnotesize
\caption{Computation of $t_{min}$ on $Z$ readings}
\SetKwInOut{Input}{input}\SetKwInOut{Output}{output}
\Input{{\tt avg100msBefore}, $t_1, \ldots, t_n$ (timestamps between $t_{end}$ and $t_{end}+200$~ms, and $Z_1, \ldots, Z_n$ ($Z$ readings at $t_1, \ldots, t_n$)}
\Output{$t_{min}$ (timestamp between $t_{end}$ and $t_{end}+200$~ms at which the sensor reading are closest to those measured before the tap.)}
\begin{algorithmic}[1]
\FOR{$i=1\ldots n$} 
\STATE{{\tt avgDiffs}$[i] = \frac{\sum_{j=i}^n(|Z_j-{\tt avg100msBefore}|)}{n-i+1}$}
\ENDFOR
\STATE{$min = \mathrm{argmin}_i({\tt avgDiffs}[i])$ {\tt //$min$ is the index at\\ which avgDiffs has its minimum value}}
\STATE{return $t_{min}$}
\end{algorithmic}
\label{alg:tmin}
\end{algorithm}

\item {Normalized time duration for mean sensor value to change from before tap to after tap event}, calculated as:
\[\resizebox{0.89\linewidth}{!}{$
\Delta_{\mathit{duration}} = \frac{t_{\mathit{after\_center}} - t_{\mathit{before\_center}}}{\texttt{avg100msAfter} - \texttt{avg100msBefore}}
$}
\]
where $t_{\mathit{after\_center}}$ is the center of the 100ms window after a tap event, and $t_{\mathit{before\_center}}$ is the center of the 100ms window before the tap event.

\item Normalized time duration for mean sensor values to change from \texttt{maxTap} to \texttt{avg100msAfter} in response to a tap event, calculated as:
\[\resizebox{0.8\linewidth}{!}{$
\Delta_{\mathit{max\_to\_avg}} = \frac{t_{after\_center} - t_{{max\_in\_tap}}}{\texttt{avg100msAfter} -\texttt{maxTap}}
$}
\]
where \texttt{maxTap} is the maximum sensor value during a tap, and $t_{{max\_in\_tap}}$ is the time when this value occurred.
\end{enumerate}
We extracted the above three grasp stability features for three sensors and four types of sensor readings ($X$, $Y$, $Z$ and $M$), for a total of $3 \times 3 \times 4 = 36$ features. 

\medskip
Complexity of computing HMOG features is linear (O$(n)$) in the sampling frequency, except for Grasp Stability Feature 1, which is quadratic (O$(n^2)$).

\section{Dataset} \label{sectionDataset}
To evaluate HMOG features, we used sensor data collected from 100 smartphone users (53 male, 47 female) during eight free text typing sessions~\cite{qingPoster}.\footnote{Our dataset is available at \url{ http://www.cs.wm.edu/~qyang/hmog.html}} Users answered three questions per session, typing at least 250 characters for each answer. In four sessions, the users typed while sitting. In another four sessions, users typed while walking in a controlled environment.

For each user, we collected an average of 1193 taps per session (standard deviation: 303) and 1019 key presses (standard deviation: 258). The average duration of a session was 11.6 minutes, with a standard deviation of 4.6 minutes.
Data was collected using the same smartphone model (Samsung Galaxy S4). We used a total of ten Samsung Galaxy S4. Data was collected over multiple days, and the same user might have received a different device during each visit.

We recorded accelerometer, gyroscope and magnetometer sensor readings (sampling 
rate 100~Hz) as well as raw touch data collected from the touchscreen, touch 
gestures (e.g., tap, scale, scroll, and fling), key press, and key release 
latencies on the virtual keyboard. Due to security concerns, Android OS forbids 
third-party applications to access touch and key press data generated on the 
virtual keyboard. Therefore, we designed a virtual keyboard for data collection 
that mimicked the look, feel, and functionality of default Android keyboard, 
including the autocorrect and autocomplete options, which the users were free to use.

During data collection users were allowed to choose the orientation of the 
smartphone (i.e., landscape or portrait). Because less than 20 users typed in landscape orientation, we performed all authentication experiments with data collected in portrait mode.

\section{Evaluation of HMOG Features} \label{sectionExperiments}

\begin{figure}[htbp]
   \centering
   \includegraphics[width=0.8\linewidth]{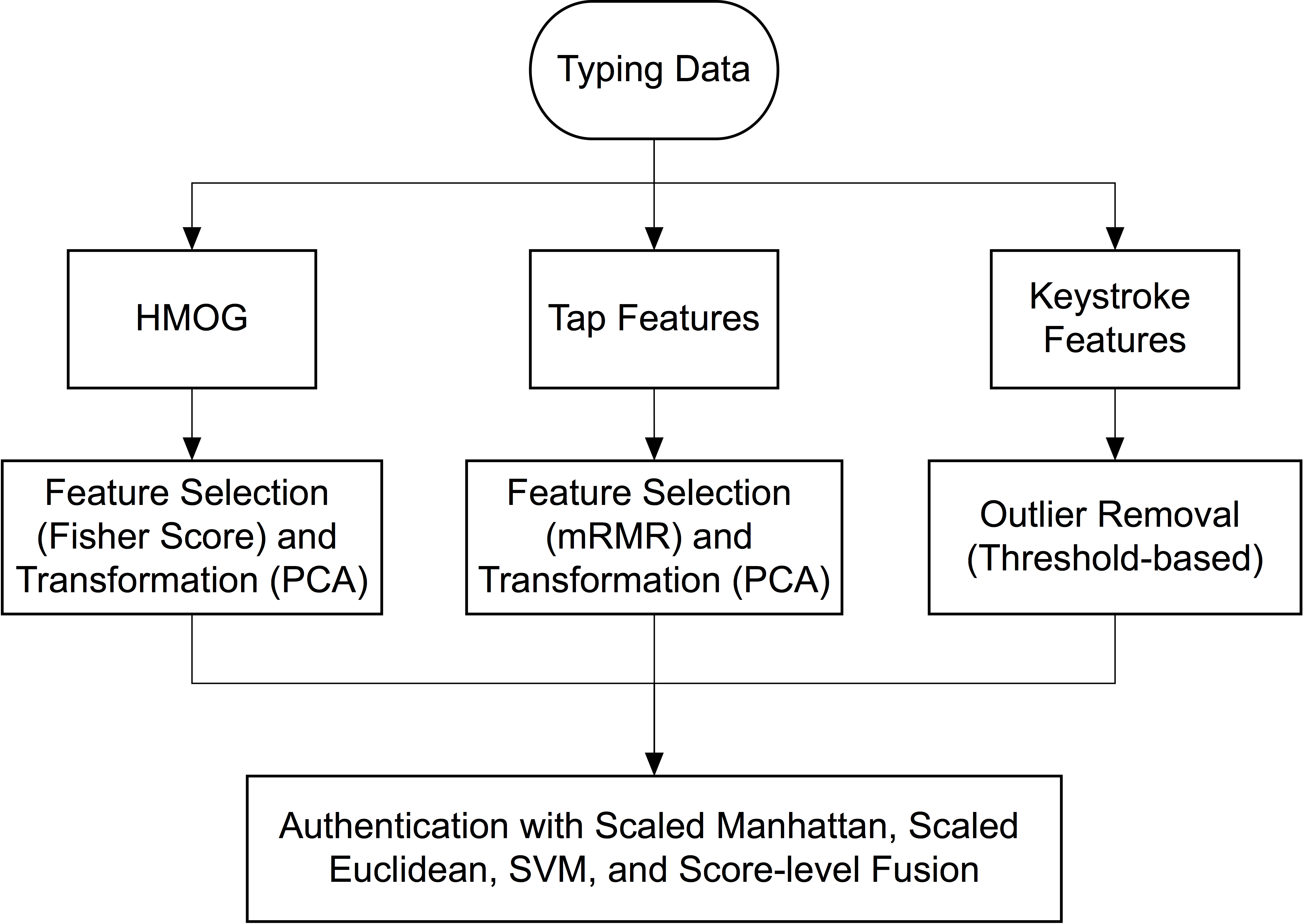} %
   \caption{Flow-diagram depicting our experiment workflow.}
   \label{fig:flowchart}
\end{figure}

Our experiment workflow involves: (1) computing features from data collected during typing; (2) performing feature selection; (3) performing feature transformation (PCA); (4) performing  outlier removal; and (5) performing authentication using  Scaled Manhattan, Scaled Euclidean, SVM verifiers, and score-level fusion. 
Figure~\ref{fig:flowchart} summarizes the experiment workflow.

\subsection{Design of Authentication Experiments} \label{authenticationExperiments}

\paragraph{1-Class Verifiers} We performed verification experiments using three verifiers~\cite{maxion2009}: scaled Manhattan (SM), scaled Euclidian (SE), and 1-class SVM. (Henceforth, we use ``SVM'' to refer to ``1-class SVM''.) We chose these verifiers because previous work on behavioral authentication has shown that they perform well. For instance, SM and SVM were top performers in a study on keystroke authentication of desktop users by Maxion et al.~\cite{maxion2009}. SVM performed well in experiments on touch-based authentication of smartphone users by Serwadda et al.~\cite{serwadda2013}. SE is a popular verifier in biometrics (see for example~\cite{blanton2011,govindarajan2013}).

Parameter tuning was not required for SM and SE. However, for SVM~\cite{libSVM}, we used RBF kernel and performed a grid search to find the parameters (for $\gamma$, we searched through $2^{-13}$, $2^{-11}$,  $2^{-9}$, \dots, $2^{13}$; and for $\nu$, we searched through 0.01, 0.03, 0.05, 0.1, 0.15 and 0.5). We used cross-validation to choose the parameter values (see Section~\ref{featurePreparation}).
 
We did not include 2-class verifiers in our evaluation. To train a 2-class verifier, in addition to data from smartphone owner, biometric data from other users (non-owners) is required. Because sharing of biometric information between smartphone users leads to privacy concerns, we believe that 1-class verifiers are more suitable for smartphone authentication. (A similar argument was made in~\cite{shen2013user}.)

\paragraph{Training and Testing}  For experiments in sitting and walking conditions, we used the first two sessions for training and the remaining two for testing. We extracted HMOG features during each tap. Thus, each training/testing vector corresponded to one tap. With keystroke dynamics features, each training/testing vector corresponded to one key press on the virtual keyboard.%

For SM and SE, the template consisted of the feature-wise average of all training vectors. We used user-wise standard deviations for each feature for scaling. We used all training vectors to construct the template (hypersphere) with SVM.
Users with less than 80 training vectors were discarded from authentication. As a consequence, ten users failed to enroll (and were not included in our experiments).

We created authentication vectors by averaging test vectors sampled during $t$-seconds scan. We report results for authentication scans of $t$ = 20, 40, 60, 80, 100, 120 and 140 seconds. We chose these scan lengths to cover both low and higher authentication latencies. Our preliminary experiments showed that for scans longer than 140 seconds, there is minimal improvement in authentication performance. 

\paragraph{Quantifying Authentication Performance}
We generated two types of scores, genuine (authentication vector was matched against template of the same user) and zero-effort impostor (authentication vector of one user was matched against the template of another). We used population equal error rate (EER) to measure the authentication performance.

\subsection{Comparing HMOG to Other Feature Sets}
We compared the authentication performance of HMOG features with touchscreen tap and keystroke dynamic features (key hold and digraph latencies).

\paragraph{Touchscreen Features from Tap Events} 
We extracted 11 commonly used touchscreen-based features for tap events (see~Table \ref{tab:relatedResearch}). Some papers (e.g.,  \cite{serwadda2013} and \cite{frank2013}) defined these features for swipes, while we extracted them from {\em taps} due to very low availability of swipes during typing, and to provide a more meaningful comparison with HMOG features, which are collected during taps. The features we extracted are: 
\begin{itemize}
\item Duration of the tap
\item Contact size features: mean, median, standard deviation, 1st, 2nd and 3rd quartile, first contact size of a tap, minimum and maximum of the contact size during the tap (9 features) 
\item Velocity (in pixels per second) between two consecutive \textit{press} events belonging to two consecutive taps.
\end{itemize}

\paragraph{Key Hold Features}
Key hold latency is the down-up time between press and release of a key. We used 89 key hold features, each corresponding to a key on the virtual keyboard. 

\paragraph{Digraph Features}
Digraph latency is the down-down time between two consecutive key presses. We used digraph features for combinations of the 35 most common keys in our dataset.\footnote{All 26 alphabetic keys, 5 keyboard switches (shift, switch between numerical and alphabetical keyboard, delete, done, return) and 4 special characters (space, dot, comma and apostrophe). The availability of other keys in our training data was extremely low ($< 1$ on average per user).} Thus we have $35^2 = 1225$ digraph features.%

\paragraph{Score-level Fusion}
To determine whether HMOG features complement existing feature sets, we combined tap, key hold, digraph and HMOG features using weighted sum score-level fusion. We chose this method because it is simple to implement, and has been shown to perform well in biometrics~\cite{ross2003}. We used the technique of Locklear et al.~\cite{locklear2014} to ensure that weights sum to one and proportion of weights is preserved when scores from some feature sets were missing (e.g. due to lack of accelerometer data). We used grid-search to find the weights which led to the best authentication performance.

\subsection{Feature Selection, Preprocessing, and Transformation} \label{featurePreparation}
To improve  authentication performance, we performed feature selection, feature transformation with Principal Component Analysis (PCA), and outlier removal. %

\paragraph{Parameter selection}
We used 10-fold cross-validation (10-CV) on training data to choose feature selection method (mRMR or Fisher score ranking), as well as to set the parameters for feature selection, PCA, and SVM. The parameters are presented in Table \ref{cvparams}. We evaluated all parameters independently for each combination of feature set, verifier, authentication scan-length and body-motion condition. 

\begin{table}
\caption{Parameters evaluated using cross-validation} \centering
\begin{tabular}{ l  l  }
\hline
Method & Parameter \\ \hline
Fisher score & Percentage of the sum of all Fisher scores \\
mRMR & Threshold on the mRMR score\\
PCA & Percentage of total variance \\
SVM & $\gamma$, $\nu$ \\
\hline
\end{tabular}
\label{cvparams}
\end{table}

For each set of parameter values, 10-CV yielded ten EERs, which we averaged to get an estimate of the EER corresponding to that set of parameter values. We then selected parameter values which had the lowest (average) EER. For 10-CV experiments involving 20- to 140-second scan lengths, the sets of parameter values that led to the lowest EERs were not always identical. In this case, we took a majority vote to select the most common parameter values.

\paragraph{Feature Selection} 
During training, we evaluated two feature selection methods: Fisher score ranking~\cite{duda2001pattern}, 
and minimum-Redundancy Maximum-Relevance (mRMR)~\cite{mRMR}. 
Our preliminary experiments showed that Fisher score performed better for HMOG features, while mRMR performed well with tap features. With key hold and digraph features, the best performing feature set contained all the features.

Fisher score ranking 
was computed independently for each HMOG feature as the ratio of {\em between-user} 
to {\em within-user} variance. (Higher Fisher score suggests higher 
discriminability of the corresponding feature.) Using 10-CV, we tested feature 
subsets whose sum of Fisher scores accounted for 80\% to 100\% of the 
sum of Fisher scores of all features.  

We selected HMOG features for each verifier separately. The following parameters 
for Fisher score ranking provided the best authentication results: 82\% (17 features) for SM
during sitting; 81\% (13 features) for SM and SVM during walking; and 80\% (16 features) for SVM during sitting. For SE, we achieved lowest EER by including resistance features only, compared to the feature subset obtained from feature selection. %
Figure~\ref{fig:fisherFeatures} reports the ranking of the features during sitting (\ref{fig:fisherFeaturesSit}) and walking (\ref{fig:fisherFeaturesWalk}).

\begin{figure}[tb] 
\subfigure[Fisher scores of HMOG features during sitting.]{\label{fig:fisherFeaturesSit}
\includegraphics[width=1\linewidth]{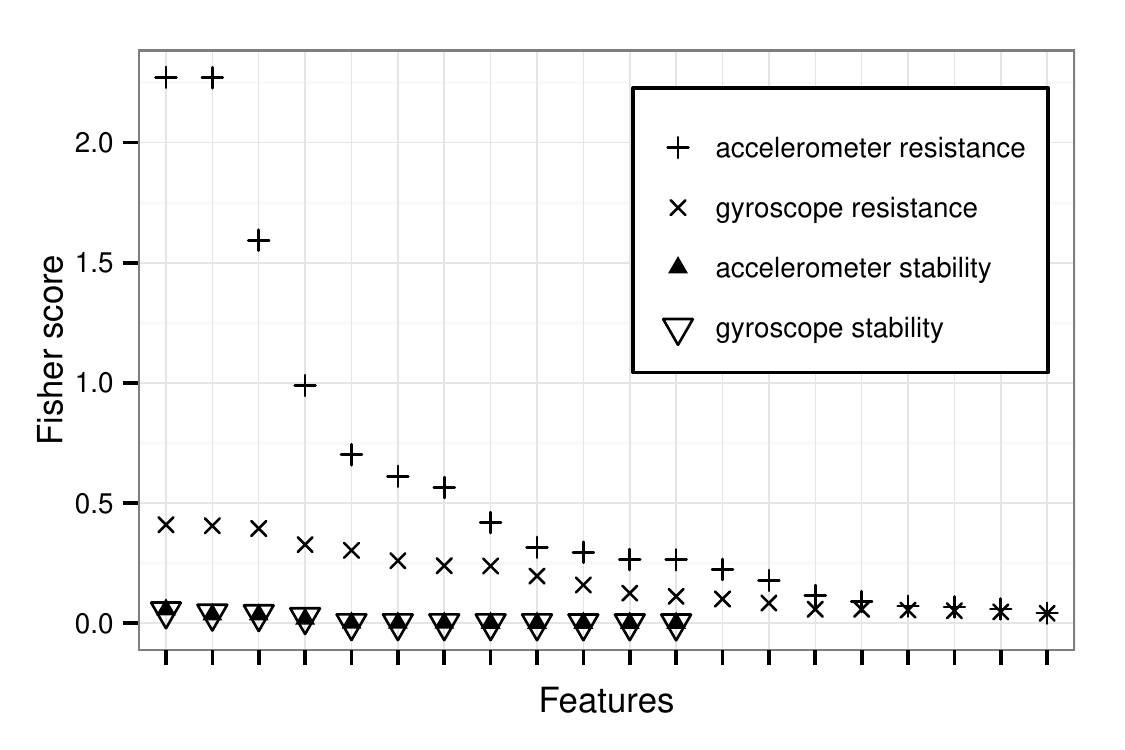} 
}

\subfigure[Fisher scores of HMOG features during walking.]{\label{fig:fisherFeaturesWalk}
\includegraphics[width=1\linewidth]{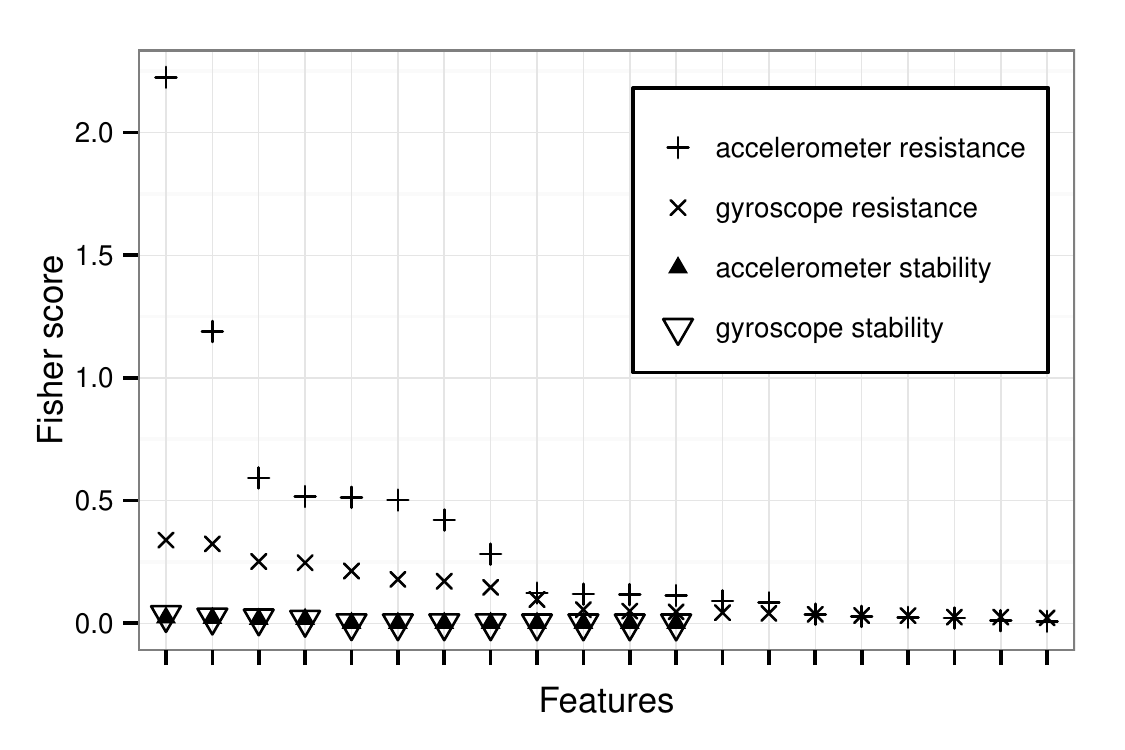} 
}

\caption[]{HMOG features extracted from accelerometer and gyroscope, sorted by Fisher score computed from training data. Higher scores correspond to features with higher discriminative power. 
Magnetometer features (not shown) ranked below accelerometer and gyroscope features.
}
 \label{fig:fisherFeatures}
\end{figure}

For tap features, with SM verifier we achieved the best results  with 3 features chosen by mRMR (threshold 0) and for SE and SVM with 2 features (threshold 0.1). The best three features according to mRMR are (in this order): duration of the tap; mean of contact size; and velocity between two consecutive down events. 

\paragraph{Outlier Removal}
For HMOG and tap templates, we evaluated the interquartile outlier removal (i.e., different subsets of the values from the first and fourth quartile are removed). Experiments with SM verifier showed that outlier removal does not improve authentication accuracy, so we did not consider it further in our experiments. 

For key hold and digraph latencies, using only outlier removal and not performing feature selection or transformation led to the best results. Outlier removal was done using two parameters: (1) latencies longer than $l$ ms were discarded and (2) if a feature occurs less than $m$ times in a user's template, the feature was discarded). 
The values evaluated for $l$ were 100, 200, 300, 400, 500 and 1000 for key hold and 200, 350, 500, 650 and 800 for digraph.
 For $m$, we experimented with 2, 5, 10, 15, 20, 40 and 60. 
The best $l$ value was 200 for key hold, and between 350--500 for digraph; the best $m$ value was between 2--60 for key hold and between 2--5 for digraph.

\paragraph{Feature Transformation} 
We used PCA to transform original features into principal components, that were subsequently used in authentication experiments. Our motivation for using PCA are: (1) to remove correlation between features to meet the assumptions in SE and SM, and (2) to reduce dimensionality by using only those principal components, which explain most of the variance. We performed PCA under two settings: (1) on all features (except magnetometer features, which performed poorly), and  (2) on a subset of features selected using Fisher score and mRMR. We performed 10-CV experiments with components explaining 90\%, 95\%, 98\%, and 100\% of total variance, to set the threshold for dimensionality reduction. PCA improved EER for HMOG features with SE when performed on resistance features, and for SVM during sitting when performed on features selected using Fisher score. %
PCA performed on all tap features improved results with SM and SE. %

\section{Authentication Results} \label{sectionResults}

In this section, we report authentication performance of HMOG features. We compare the performance of HMOG with keystroke and tap features and report results with fusion.\footnote{See Supplement for the number of genuine and impostor scores used for calculating EERs in this paper.} Finally, we present our findings on why HMOG features achieve lower EERs during walking. %

\subsection{Performance of HMOG Features}

\begin{table*}[!htbp]
\caption{Summary of Lowest EERs achieved using only HMOG features.}%
\centering
\begin{tabular}{| c | c | c | c | c |} %
\hline
Verifier 	    			   & Best Performing 	 Features                & Sensors                                &   Sitting             &     Walking   \\ \hline  %
Scaled Manhattan        & With Fisher Score Ranking & Accelerometer + Gyroscope	& 19.67\% & 13.62\%  \\ \hline %
Scaled Euclidean         &  With PCA, no Feature Selection	& Accelerometer + Gyroscope			& 25\% & 15.31\% \\ \hline %
1-Class SVM  		   & With Fisher Score Ranking; with PCA for sitting, without PCA for walking & Accelerometer + Gyroscope	& 27.45\% & 15.71\% \\ \hline %
\end{tabular}
\label{tab:individualperformance}
\vspace{12pt}
\caption{Summary of Lowest EERs achieved with score-level fusion of HMOG,Tap, and Keystroke Dynamics (KD) Features.}
\centering
\begin{tabular}{| c | c | c | }%
\hline
Score-Level Fusion with SM Verifier 			    		& Sitting          	 	 & Walking \\ \hline %
HMOG, Tap, and Keystroke Dynamics		& \textbf{10.05}\% 		 & \textbf{7.16}\%  \\ \hline %
HMOG and Tap  			    		& 11.41\%      	 	 & 8.53\%  \\ \hline %
Tap and KD				   		& 11.02\%			 & 10.79\% \\ \hline %
\end{tabular}
\label{tab:fusionperformance}
\end{table*}

HMOG features extracted from both accelerometer and gyroscope outperformed those extracted from individual sensors. %
HMOG features from magnetometer performed consistently worse than accelerometer and gyroscope features with all verifiers, in both sitting and walking conditions. Combining magnetometer features with features from accelerometer and gyroscope did not improve performance.

Resistance features outperformed stability features in both walking and sitting conditions (and also had higher Fisher score, see Figure~\ref{fig:fisherFeatures}). This suggests that the ability of resistance features to discriminate between users is higher than that of stability features. In fact, feature selection on HMOG with 10-CV  resulted in selecting resistance features only.
In some cases, using PCA after feature selection further lowered EERs. Table \ref{tab:individualperformance} summarizes the sensors and feature selection/transformation that led to the lowest EERs.

In Figure~\ref{fig:hmogAllVerifiersSitWalk}, we show the EERs of all verifiers under sitting and walking conditions, when the authentication scans varied between 20 and 140 seconds. Among the three verifiers, SM overall had lower EERs for both sitting and walking conditions and therefore we present the results only with SM hereafter. %

\begin{figure}[t]
\includegraphics[width=1\linewidth]{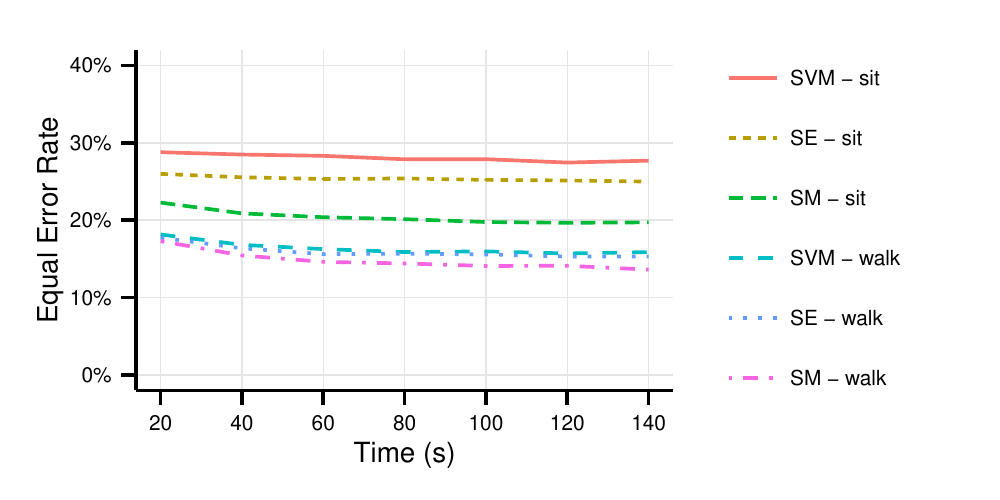}
\caption[]{Comparison of HMOG features in sitting and walking conditions for three verifiers. The reported EERs are with PCA for SE, and for SVM-sitting; and without PCA for SM and SVM-walking. $X$-axis shows authentication time in seconds.}
\label{fig:hmogAllVerifiersSitWalk}
\end{figure}

\begin{figure}[t]
\includegraphics[width=1\linewidth]{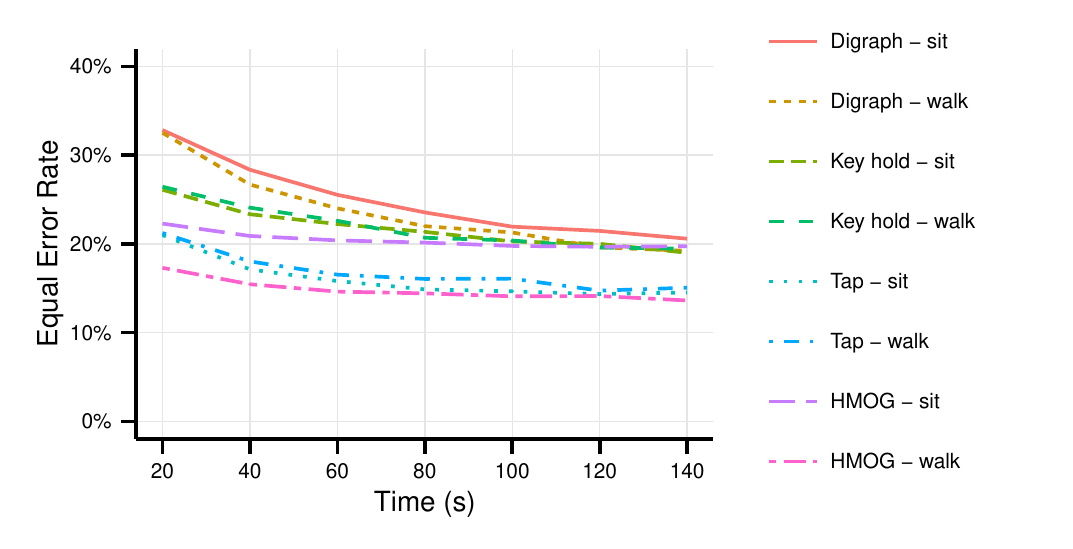}
\caption[]{Comparison of EERs of HMOG with keystroke dynamics (i.e., key hold and digraph) and tap features with SM verifier. $X$-axis shows authentication time in seconds.} %
\label{fig:walkVSsitAllSM}
\end{figure}

\paragraph{Comparison of HMOG with Keystroke Dynamics and Tap Features}
Tap features and HMOG features in walking condition performed better than keystroke dynamics features; HMOG in sitting outperforms keystroke dynamics for shorter scans and is comparable for longer scans (see Figure \ref{fig:walkVSsitAllSM}). 

HMOG features outperformed tap features in walking condition, while tap outperformed HMOG in sitting. 
The performance of tap and keystroke dynamics features did not change significantly between sitting and walking. However, the performance of HMOG improved considerably (up to 6.11\%) during walking. %

\begin{figure}[t]
 \centering
\subfigure[Sitting]
{
\includegraphics[width=1.1\linewidth]{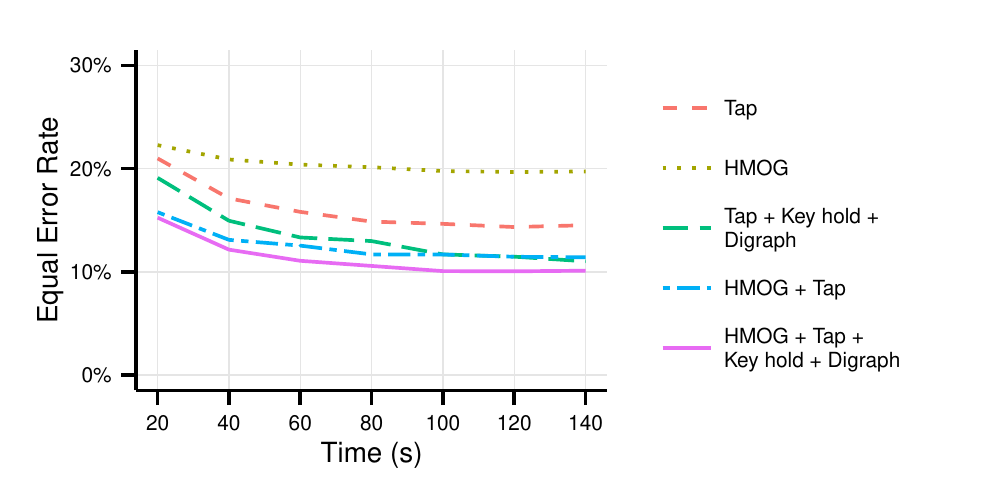}
\label{fig:sitFusionSM}}
\subfigure[Walking]
{
\includegraphics[width=1.1\linewidth]{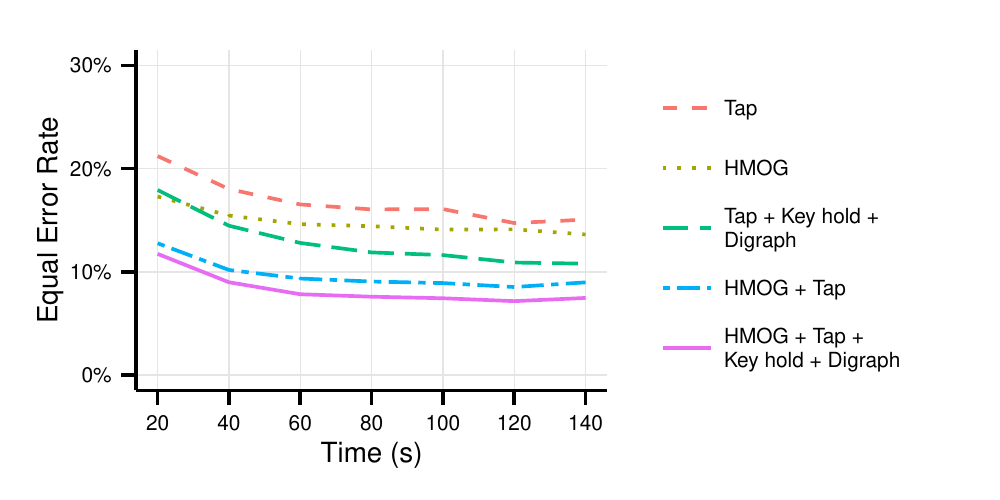}
\label{fig:walkFusionSM}}
\caption{Score-level fusion of combinations of feature types with SM verifier. $X$ axes show authentication time in seconds.}
\end{figure}

\paragraph{Fusion of HMOG, Tap, and Keystroke Features} We used SM verifier and performed score-level fusion with the following feature combinations: \{HMOG, tap, keystroke dynamics\};  \{tap, keystroke dynamics\}; and \{tap, HMOG\}. Detailed fusion results for sitting  and walking conditions are presented in figures \ref{fig:sitFusionSM} and~\ref{fig:walkFusionSM}, respectively. The lowest EERs achieved with fusion are summarized in Table~\ref{tab:fusionperformance},v and the corresponding DET curves for fusion on 60- and 120-second scan lengths are shown in Figure~\ref{fig:DETS}. %

Our results show that: (1) for both walking and sitting conditions, score-level fusion of all signals led to the lowest EER; and (2) fusing HMOG with tap features led to a decrease in EERs and either outperformed (in the case of walking and shorter scans in sitting) or was comparable (in the case of longer scans in sitting) to fusion of tap and keystroke dynamics (see figures~\ref{fig:sitFusionSM}) and \ref{fig:walkFusionSM}). Both (1) and (2) indicate that HMOG provides additional distinctiveness to that of tap and keystroke dynamics, especially in walking condition.

\begin{figure*}[t]
 \centering
\subfigure[Sitting (60-second scans)]
{
\includegraphics[width=0.4\linewidth]{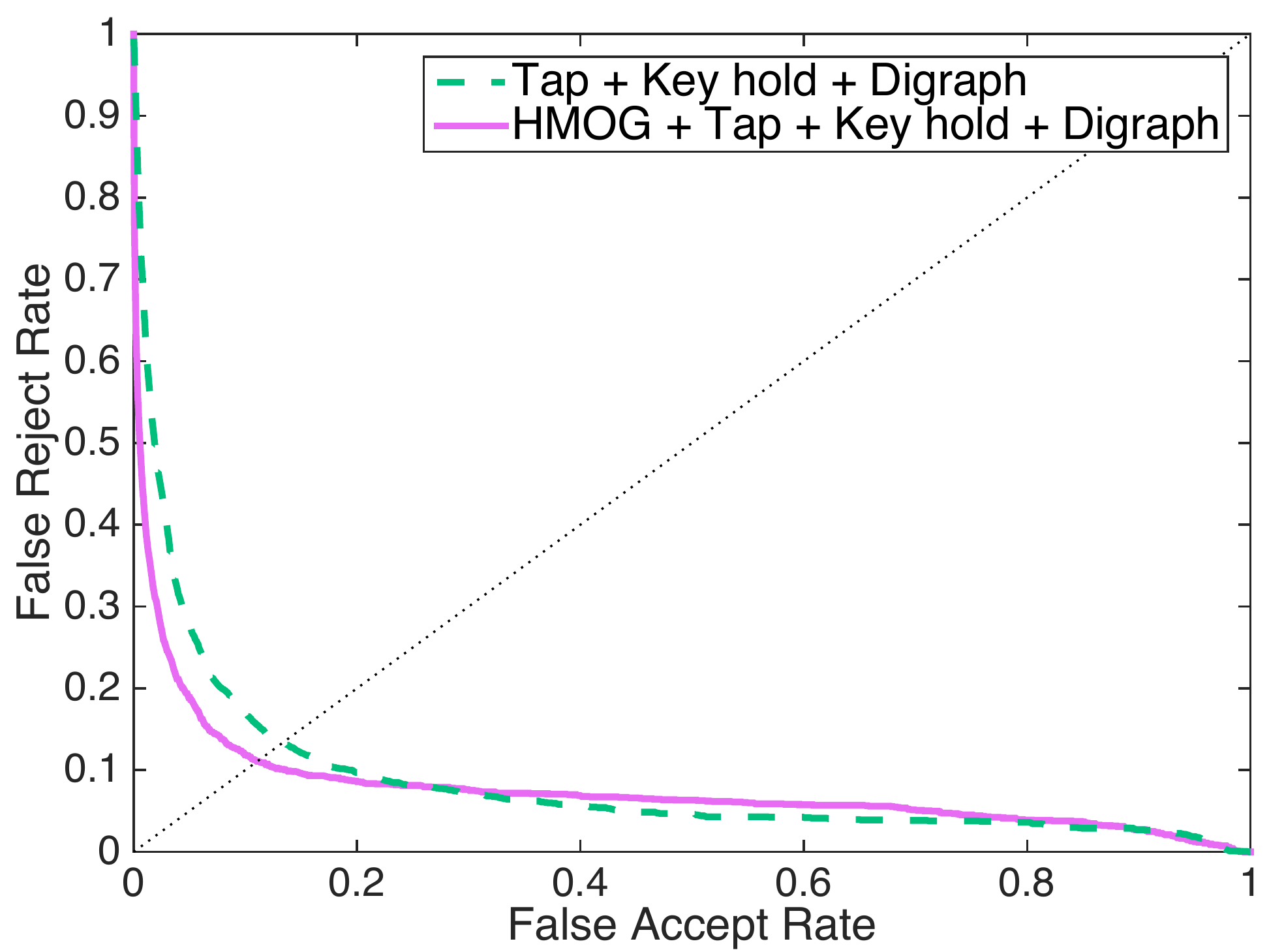}
\label{fig:DETsitFusionSM60}}
\subfigure[Walking (60-second scans)]
{
\includegraphics[width=0.4\linewidth]{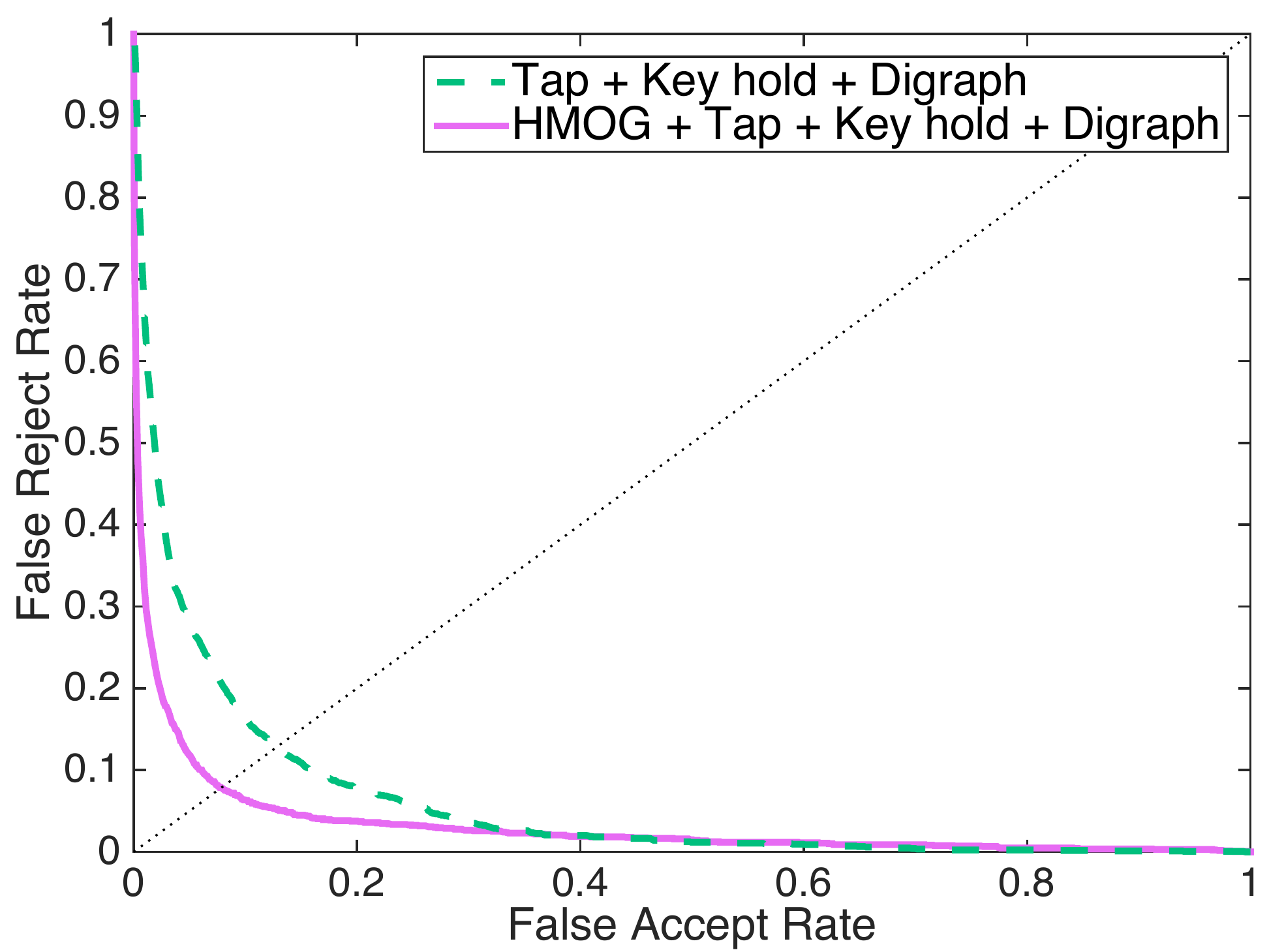}
\label{fig:DETwalkFusionSM60}}
\subfigure[Sitting (120-second scans)]
{
\includegraphics[width=0.4\linewidth]{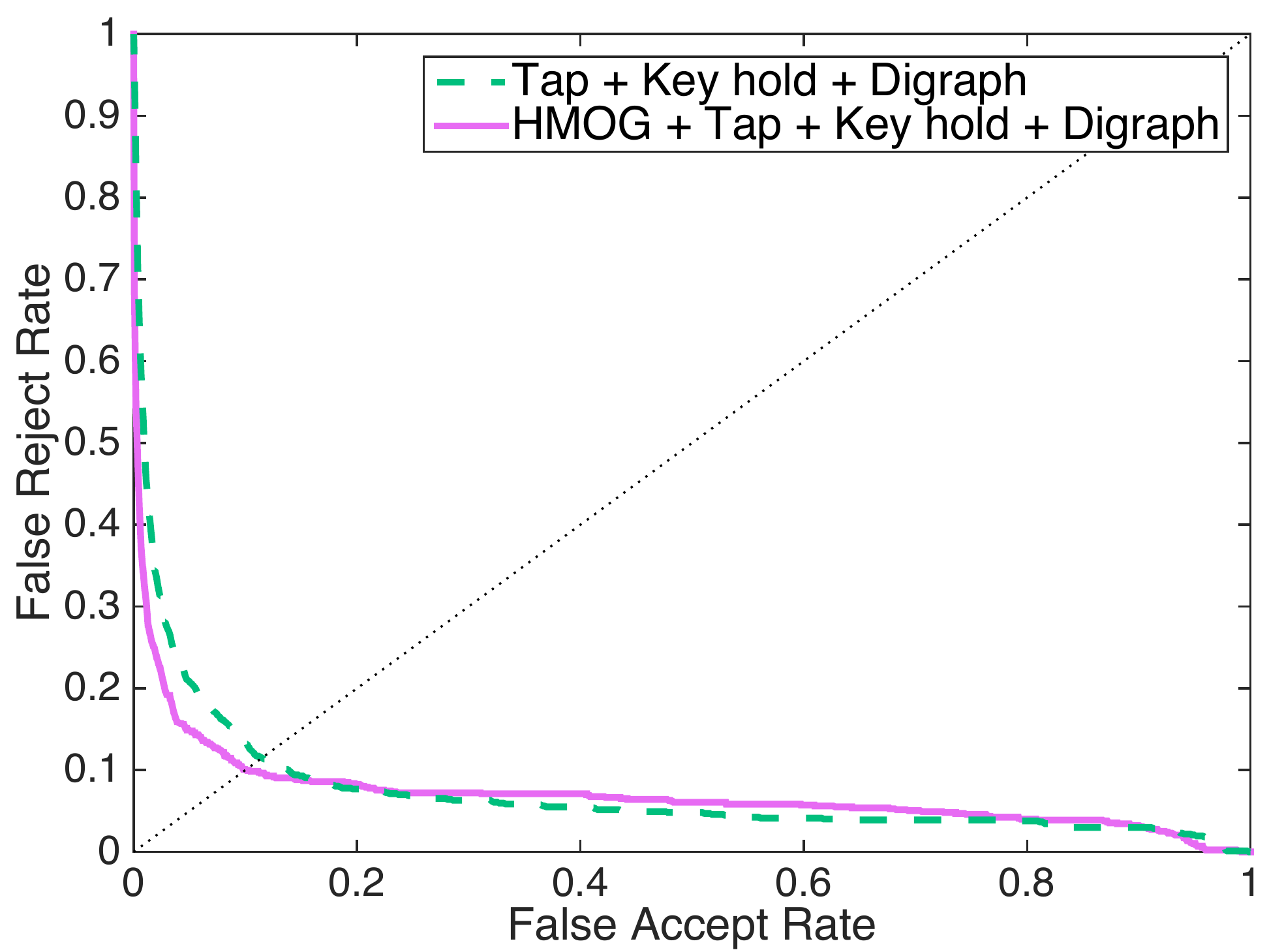}
\label{fig:DETsitFusionSM120}}
\subfigure[Walking (120-second scans)]
{
\includegraphics[width=0.4\linewidth]{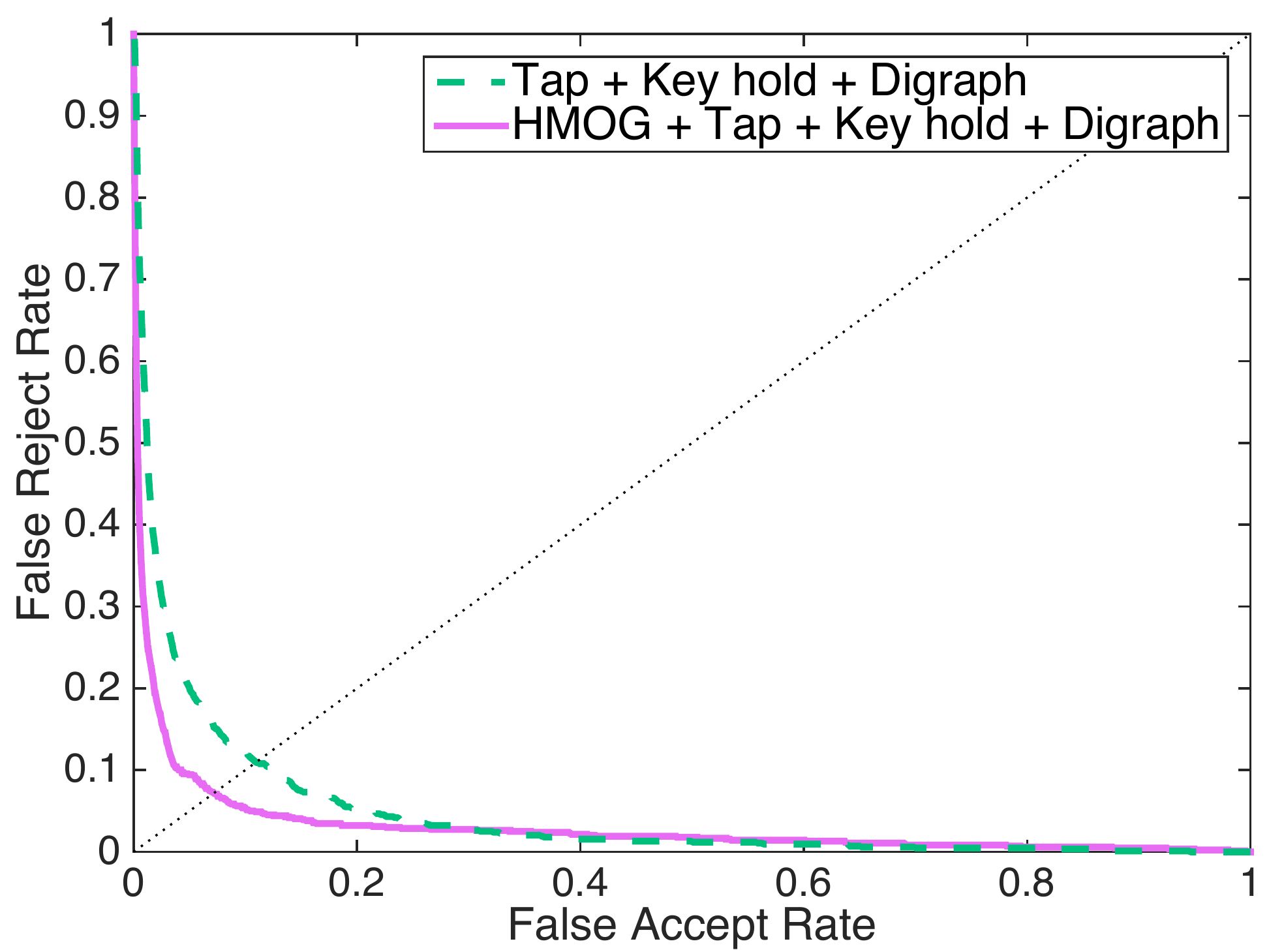}
\label{fig:DETwalkFusionSM120}}
\caption{DET curves for fusion of all feature types including and excluding HMOG. The scan lengths are 60- and 120-seconds.}
\label{fig:DETS}
\end{figure*}

\begin{figure*}[t]
\includegraphics[width=1\linewidth]{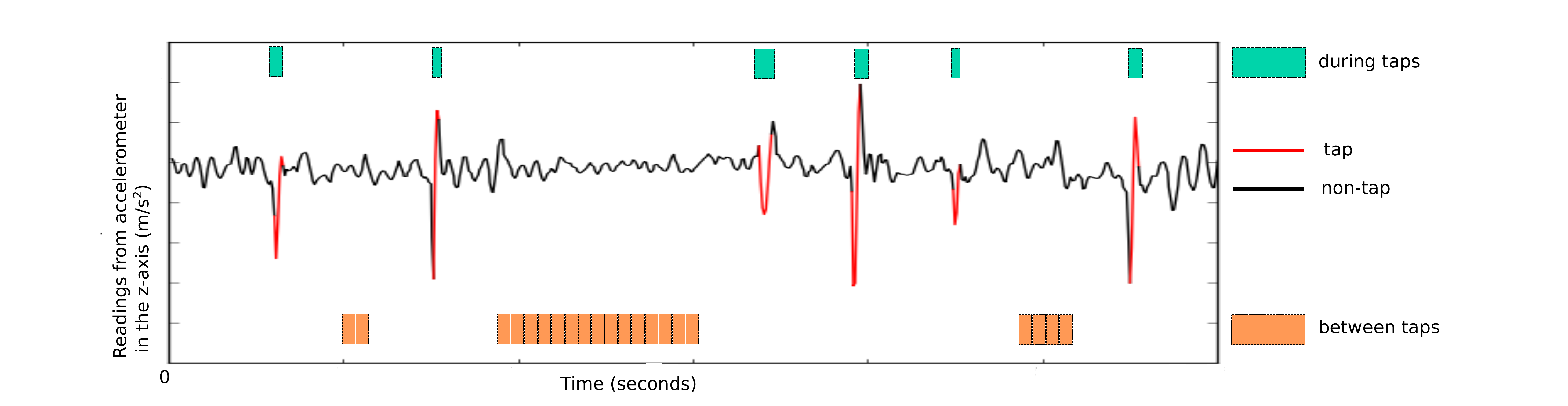}
\caption[]{HMOG features extracted {\em during} and {\em between} taps. The figure shows a sample of readings from the z-axis of accelerometer in sitting condition.}
\label{fig:betweenExplained}
\end{figure*}

\subsection{Why HMOG Features Perform Better During Walking}

\begin{figure}[t]
\includegraphics[width=1.1\linewidth]{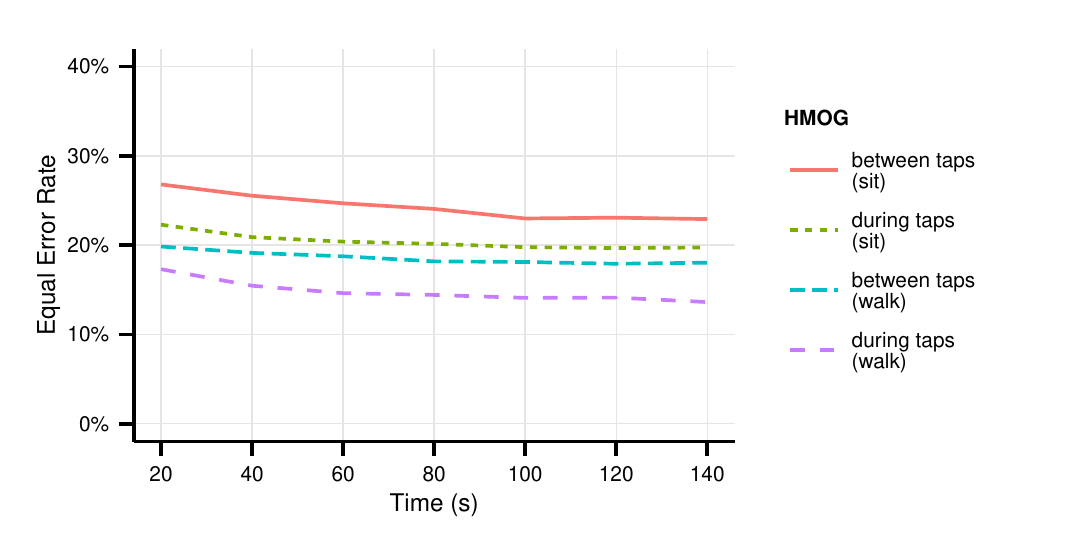}
\caption[]{Performance of HMOG features extracted {\em during}  and {\em between} taps. $X$-axis shows authentication time in seconds.}
\label{fig:betweenTapsSM}
\end{figure}

We investigated why HMOG features performed better during walking. Specifically, we investigated whether the high authentication accuracies of HMOG features during walking were due to hand movements caused by taps, or due to movements caused by walking, or a combination of both. %

\paragraph{Experiment setup} We extracted 64 HMOG features from two segments of an accelerometer/gyroscope signal: (1) {\em during tap}, as discussed in previous sections; and (2) {\em between taps}, in which HMOG features were extracted when the user was \textit{not} tapping the screen (see Figure~\ref{fig:betweenExplained}). In (2), the signal between taps was segmented into non-overlapping blocks of 91~ms; one HMOG feature vector was extracted from each block. 
We selected 91ms as the block size because it was the median duration of a tap in our training data. This ensured that the number of sensor readings used to extract a HMOG feature vector {\em between} and {\em during} tap remained same. %

HMOG features extracted {\em during} taps use sensor readings from 100~ms before and 200~ms after a tap event (see Section \ref{sectionFeatures}). We extracted HMOG features {\em between} taps starting 300~ms after a tap until 300~ms before the next tap, to avoid any overlap between {\em during} and {\em between} HMOG features.%
The average number of the training vectors per user for HMOG {\em during} taps was 1122 for sitting, and 1186 for walking. For {\em between} taps, it was 7692 for sitting and 7462 for walking. %
The average number of testing vectors per user for HMOG features {\em during} taps was 897 for sitting and 972 for walking. For  {\em between} taps, it was 5885 for sitting and 5768 for walking. Verification experiments were performed using SM.

\paragraph{Performance of HMOG Features Extracted During vs. Between Taps}  
We compared HMOG features extracted during taps with the same features extracted between taps for sitting and walking conditions. For sitting, HMOG features extracted {\em during} taps performed consistently better  than those extracted {\em between} taps (see EERs in Figure~\ref{fig:betweenTapsSM}). This indicates that HMOG features were able to capture distinctive hand micro-movement patterns when the users tapped on the phone. %
Similarly, for walking, HMOG features extracted {\em during} taps performed better than those extracted {\em between} taps (see EERs in Figure~\ref{fig:betweenTapsSM}). This again indicates that HMOG features capture user's distinctive hand micro-movement patterns when the user is tapping, regardless of the motion condition.

\paragraph{Impact of Walking on HMOG Features Extracted Between Taps} HMOG features extracted  {\em between} taps during walking outperformed the same when extracted during sitting (see {\em between} tap EERs for sitting and walking in Figure~\ref{fig:betweenTapsSM}). This indicates that HMOG features capture distinctive movements induced by walking, even in the {\em absence} of tap activity.%

\bigskip
Supported by the above results, the high authentication accuracies achieved by HMOG features during walking can be jointly attributed to: (a) the distinctiveness in hand movements caused by tap activity and (b) the distinctiveness in movements caused by walking. %

\section{Biometric Key Generation \\from HMOG features}
\label{sec:bkg}

\begin{table*}[t]
\caption{List of symbols used in this section.}
\begin{tabular}{|l|l||l|l|}
\hline
$C$ 							& Error-correcting code (vector space over $\Z_p$). & $\delta = (x-c)$					& Discretize biometric sample $x$ masked by $c$. \\

$n$ 							& Number of biometric features, as well as length of $C$. & $\gamma$					& Fuzzy commitment \\
$p$ 							& Alphabet size of $C$. ($p$ is a prime number larger $n$.)  & ${\sf d\_range}_{{F}_i}$			& Upper bound of the range of ${\sf DS}_{{F}_i}$ \\
$l$							& Dimension of $C$ (size of the basis of $C$).  & ${\sf min}_{{F}_i}, {\sf max}_{{F}_i}$ & Minimum and maximum values of $F_i$	 \\ 
$w_L(\cdot), d_L(\cdot,\cdot)$			& Lee weight and Lee distance functions. & $\sigma_i$				& Standard deviation of $F_i$ \\
$z$							& System-wide public constant or user PIN/password. & $x_i$  		& Instance of $F_i$ \\ 
$c = (c_1, \ldots, c_n)$			& Codeword from $C$. & ${F}_i$		& Biometric feature $i$\\
$x = (x_1, \ldots, x_n)$			& Discretized biometric sample. & & \\

\hline
\end{tabular}
\label{tbl:bkg-symbols}
\end{table*}

In this section, we evaluate the performance of HMOG features for biometric key 
generation (BKG). For this purpose, we introduce our BKG construction, which 
extends and generalizes the fuzzy commitment scheme of Juels et 
al.~\cite{fuzzy_commitment}. While the technique in~\cite{fuzzy_commitment} operates on features represented using a single bit, our BKG construction represents features as symbols of an alphabet of arbitrary prime size $p$. Our BKG relies on Reed-Solomon error correcting codes~\cite{rot06} in Lee metric~\cite{lee58}. Notation used in this section is presented in Table~\ref{tbl:bkg-symbols}.

\paragraph{Preliminaries}
BKG uses  biometric information to prevent unauthorized access to cryptographic keys; these keys can then be used, e.g., to encrypt/decrypt sensitive information. The process of protecting a key is referred to as {\em 
committing}, and the outcome of this process is a {\em commitment}. Given a 
commitment, the cryptographic key is reconstructed by {\em decommiting} (or {\em 
opening}) it, using information from a biometric signal. 
Informally, a BKG construction is secure if a key committed using a biometric signal $s$ can be opened only using a signal $s'\approx s$, and both $s$ and $s'$ are from the same user. 

BKG techniques use error-correcting codes to address natural variations among 
different biometric samples from the same users.  An error-correcting code is 
defined as a set $C$ of codewords. Typically, there are two functions associated 
with a code: ${\sf encode(\cdot)}$ and ${\sf decode(\cdot)}$. The former maps a 
message to a codeword; the latter---a possibly perturbed codeword to the original 
codeword. 
The ${\sf decode(\cdot)}$ function is designed to maximize the probability of 
correct decoding. 

\subsection{Our Construction}
\paragraph{Scaling and Discretization}
BKG techniques work on discrete values, instead of real values. Therefore, the 
user first performs scaling and discretization of the feature vector representing 
her biometric.
Each feature $F_i$ is assigned a discretization range $[0, {\sf d\_range}_{F_i}]$, where ${\sf d\_range}_{F_i} \in \{(p-1)/2,\ldots, p-1\}$ is negatively correlated to the standard deviation $\sigma_i$ of $F_i$ (i.e., if $\sigma_i < \sigma_j$, then ${\sf d\_range}_{F_i} > {\sf d\_range}_{F_j}$).

Let $x_{i}$ be an instance of ${F_i}$, and ${\sf min}_{F_i}$ and ${\sf max}_{F_i}$ be respectively the typical minimum and maximum value of ${F_i}$. Discretization and scaling are performed as:
\[\small
{\sf DS}_{{{F_i}}}(x_i) = 
\resizebox{0.8\linewidth}{!}{$
\begin{cases}
0 & x_i<{\sf min}_{F_i} \\ \\
\left\lfloor {\sf d\_range}_{F_i}\cdot\left(\frac{ x_i-{\sf min}_{F_i}}{{\sf max}_{F_i}-{\sf min}_{F_i}}\right)\right\rfloor & {\sf min}_{F_i} \leq x_i \leq {\sf max}_{F_i} \\ \\
{\sf d\_range}_{F_i} & x_i>{\sf max}_{F_i} \\
\end{cases}$}
\]

\paragraph{Committing a Key}
To commit a cryptographic key using $n$ biometric features, the user 
selects a random codeword $c$ of length $n$ from $C\subset(\Z_p)^n$. The key is 
computed as $k = \PRF_c(z|0)$, where $\PRF$ is a pseudorandom function family, 
$z$ is a system-wide public constant and ``$|$'' denotes string concatenation. (BKG can be augmented with a second authentication factor by setting $z$ to a user-provided password.) $c$ is then committed using the user's biometrics as discussed next. 
Let $x = (x_1, \ldots, x_n)$ be a scaled and discretized feature vector. The user 
computes $\delta = (x-c) = (x_1-c_1, \ldots, x_n-c_n)$ and publishes commitment $
\gamma = (\PRF_c(z|1), \delta)$.

The user computes $k$ from $\gamma$ and her biometric signals (and possibly a 
password $z$) as follows. She extracts biometric features from the signal, and 
encodes them as $y=(y_1,\ldots,y_n)$. Then, she computes $c' = {\sf decode}(y-
\delta)$. If $\PRF_{c'}(z|1) = \PRF_c(z|1)$, then $k=\PRF_{c'}(z|0)$ with overwhelming probability. 

Asymptotic complexity of BKG key retrieval is dominated by one instance of Euclidean algorithm and one matrix-vector multiplication, both in $O(n^2)$ finite field operations in a field of size $p \geq n$. Security of our construction is analyzed in Appendix~\ref{sec:security}.  %

\paragraph{Using Lee-metric Decoding for BKG}
Distance between feature vectors is defined using the Lee distance~\cite{lee58}---a discrete approximation of SM:

\begin{definition}[Lee weight]
Let $p$ be an odd prime. The \emph{Lee weight} of element $x \in \Z_p$ is defined as $w_L(x) = \min|x'|$, for $x' \equiv x \bmod{p}$. %
The \emph{Lee weight} of vector $x = (x_1,\ldots, x_n) \in (\Z_p)^n$ is defined as the sum of Lee weights of its elements, i.e., $w_L(x) = \sum_{i=1}^n{w_L(x_i)}$.
\end{definition}

\begin{definition}[Lee distance]
The \emph{Lee distance} of vectors $x, y \in \Z_p$ is the Lee weight of their difference, i.e., $d_L(x, y) = w_L(x - y)$. 
\end{definition}
In $\Z_2$, the Lee weight coincides with Hamming weight.

We used normalized generalized Reed-Solomon codes from \cite{rot06}, presented next, to implement the ${\sf encode}(\cdot)$ and ${\sf decode}(\cdot)$ functions.

\begin{definition}
Let $l \leq n$ and $n\leq p$. A linear $[n, l]$-code over $\Z_p$ is a $l$-dimensional vector subspace of $(\Z_p)^n$. 
 A normalized Reed-Solomon $[n, l]$-code over $\Z_p$ is a linear $[n, l]$-code over $\Z_p$ with parity-check matrix: 
\[ \resizebox{0.55\linewidth}{!}{$
 H = \left( 
 \begin{array}{cccc}
1 & 1   & \ldots & 1 \\
1 & 2   & \ldots & n \\
1 & 2^2 & \ldots & n^2 \\
\multicolumn{4}{c}{\vdots}   \\
1 & 2^{n-l-1} & \ldots & n^{n-l-1} \\
 \end{array} 
 \right)$}
\]
and generator matrix:
\[ \resizebox{0.55\linewidth}{!}{$
 G = \left( 
 \begin{array}{cccc}
v_1 &     v_2 & \ldots &     v_n \\
v_1 & 2   v_2 & \ldots & n   v_n \\
v_1 & 2^2 v_2 & \ldots & n^2 v_n \\
\multicolumn{4}{c}{\vdots}   \\
v_1 & 2^{l-1} v_2 & \ldots & n^{l-1} v_n \\
 \end{array} 
 \right) $}
\]
\end{definition}

The rows of $G$ form a basis of the nullspace of $H^T$. 

To obtain a random codeword from $C$, we select a $l$-tuple $m = (m_1, \ldots, m_l) \in (\Z_p)^l$ uniformly at random, and encode $m$ as $c = mG$. 
The Lee distance of $C$ is $2(n-l)$, so for any error $e = (e_1, \ldots, e_n)$ with $w_L(e) < n-l$, ${\sf decode}(c+e) = c$.

\subsection{Evaluation of HMOG Features on BKG}

To evaluate HMOG features for BKG, we determined their authentication 
accuracy and security against population attacks. We then compared our results with HMOG features to that of tap, key hold, and swipe features under the same metrics.

\paragraph{Biometric Accuracy}
Figures~\ref{fig:bkg_sitting} and~\ref{fig:bkg_walking} summarize the results of 
our experiments, performed on our 100-user dataset. We evaluated BKG using the 
features that performed best for authentication. We used 17 and 13 HMOG features in sitting and walking conditions, respectively (see Figure~\ref{fig:fisherFeatures}).

We ran experiments on four different feature subsets: (1)~HMOG-only features; 
(2) 11 tap-only features; (3) 12 (sitting) and 8 (walking) key hold-only features; 
and (4) HMOG and 3 best-performing tap features. For both walking and sitting experiments, feature 
subset (4) provided the best results, i.e., 15\% and 20\% EER respectively, for both one-minute and two-minute scan lengths. For sitting 
experiment, key generation was not possible with (3), as the within-user 
variability of the biometrics signals was too high. 

Sedenka et al.\cite{sedenka2014} showed that Linear Discriminant Analysis (LDA)~\cite{lda} 
improves BKG accuracy on desktop keystroke dynamics. However, HMOG, tap, and 
key hold features on a virtual keyboard did not benefit from LDA. Therefore we do 
not report BKG results with LDA for these features.

\paragraph{Security Against Population Attacks}
EER computed via zero-effort attacks provides limited information on the security 
of a BKG scheme, because it does not take into account all the information 
readily available to the adversary. In particular, with BKG the adversary has 
access to: (1) the commitment~$\gamma$; and (2) an approximation of the 
distribution of the user's biometric signals obtained from  {\em population 
data}. 

Access to $\gamma$ allows the adversary to test whether a particular feature 
vector decommits the key. The adversary can perform this test {\em offline}, 
i.e., with no restrictions on the number of attempts performed (within the limits of the available resources). Therefore, the hardness of ``guessing'' a user's feature vector given 
$\gamma$ is an upper bound on the security of a BKG scheme. 

The adversary can use (2) to guess the user's feature vector more efficiently, 
under the assumption that biometric signals from different users are not 
completely independent. To this end, Ballard et al.~\cite{bal08} proposed the notion 
of \emph{guessing distance}. It is defined as the logarithm of the number of 
guesses necessary to open a 
commitment using feature vectors from multiple impostors.

We instantiated guessing distance in our setting as follows. First, we built a 
commitment $\gamma_i=(\PRF_{c_i}(z|1), \delta_i)$ from the feature vector 
of each user $i$. Then, we used the biometric sample from user $j$ to open all $
\gamma_i$ such that $i\neq j$, and ranked users according to how many 
commitments they were able to open.  Finally, for each user $i$ we select users 
$j\neq i$ following to this ranking, and determined how many attempts were 
required to open $\gamma_i$. Guessing distance was computed as the binary 
logarithm of this value.
There might be users $i$ for which no feature vector from other users could open 
$\gamma_i$. We refer to the commitments of these users as {\em non-guessed}.

Table~\ref{tbl:bkg_guessing} summarizes the results of our experiments for one 
minute scans. The lowest EER was achieved by combining HMOG features 
with the tap features selected by mRMR. Overall, HMOG outperformed tap 
features for biometric key generation. Nevertheless, our results show that most 
commitments can be guessed using population data. 

\paragraph{Comparison of HMOG and Swipe Features}
Because there is no previous work on BKG using touch-, accelerometer-, or gyroscope-based features, we compared BKG on HMOG with BKG on touch features extracted from swipes ({\em swipe features} hereafter). For this purpose, we computed swipe features from the datasets of Serwadda et al.~\cite{serwadda2013}. As in \cite{serwadda2013}, we used the whole first session for computing commitments and ten swipes from the second session to perform impostor/genuine {\em open} attempts. 
Results are reported in Table \ref{tbl:bkg_guessing}. 

We also performed experiments on the touch dataset of Frank et al.~\cite{frank2013}. However, a large majority of the users could not reliably decommit their own keys. This was due to the large variance between the vectors used to build the commitment, and the one used to open it. Therefore, we did not include these results in this paper.

\begin{table*}[ht!]
\centering
\caption{Security of BKG based on HMOG, tap, key hold, and swipe features. %
}
\label{tbl:bkg_guessing}

\begin{tabular}{llllll}
Dataset                              & Features                                                                    & EER    &  \parbox{1.5cm}{Average guessing distance} & \parbox{2cm}{Non-guessed commitments} & $\log_2(|C|)$ \\ \hline
\multirow{3}{*}{Our dataset-sitting} & HMOG                                                                           & 23.4\% & 2.8         & 2\%                    & 19                  \\
                                     & HMOG with best 3 tap                                                         & 20.1\% & 2.7         & 1\%                    & 18                  \\
                                     & Tap                                                                          & 25.7\% & 1.6         & 2\%                     & 25                  \\ \hline
\multirow{4}{*}{Our dataset-walking} & HMOG                                                                           & 17.4\% & 2.9         & 2\%                    & 27                  \\
                                     & HMOG with best 3 tap                                                         & 15.1\% &3.2        & 5\%                    & 33                  \\
                                     & Tap                                                                          & 28.4\% & 1.9         & 0\%                     & 30                  \\
                                     & Key hold                                                                        & 28.9\% & 1.9         & 0\%                     & 10                   \\ \hline
Serwadda et al.~\cite{serwadda2013}                     & \parbox{3cm}{portrait orientation, vertical swipes, with LDA}                  & 34.2\% & 3.3         & 0\%                     & 39                  \\ %
\end{tabular}
\end{table*}

\begin{figure}[t!]
 \centering
\subfigure[EER for Sitting]
{
\includegraphics[width=1.1\linewidth]{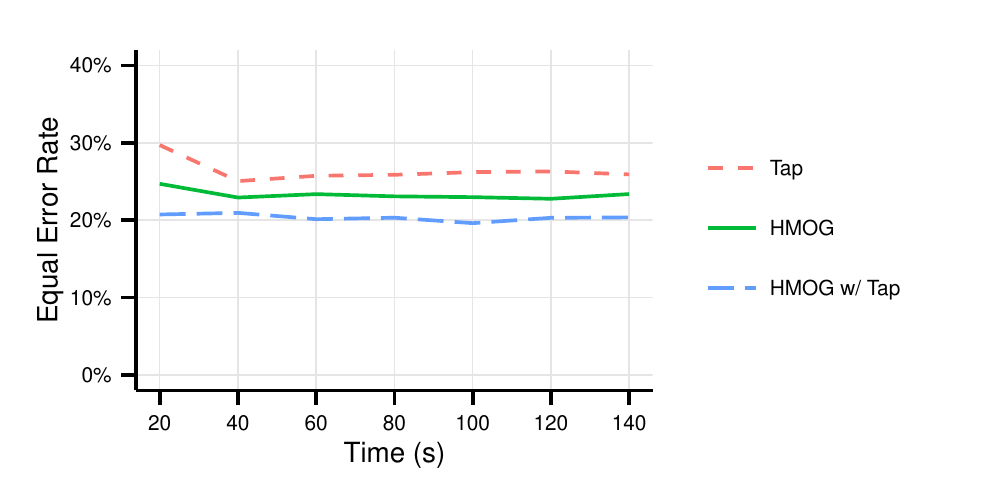}
\label{fig:bkg_sitting}}
\subfigure[EER for Walking]
{
\includegraphics[width=1.1\linewidth]{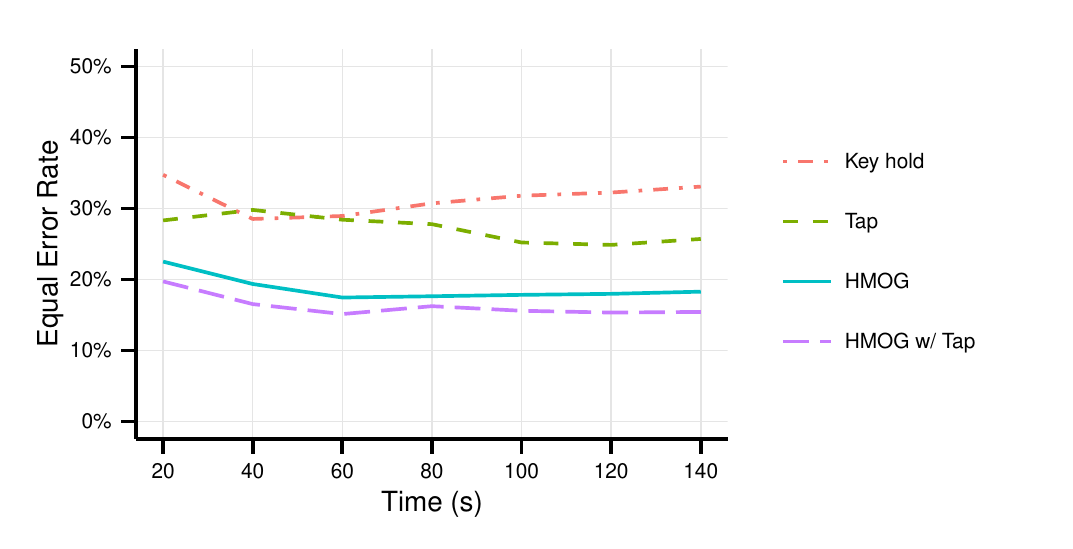}
\label{fig:bkg_walking}}
\caption{EERs for BKG during sitting and walking. $X$-axis shows the authentication time in seconds.} %
\end{figure}

\section{Energy Consumption of HMOG Features}
\label{sec:power}

We measured the energy consumption of two basic modules involved in extracting HMOG features:  (1) sensors (i.e., accelerometer and gyroscope); and (2) feature computation (i.e., calculation of a HMOG feature from raw sensor readings). Our main finding from energy consumption analysis is that decreasing sensor sampling rate (from 100Hz to  as low as 16Hz) considerably reduced the energy overhead without impacting the authentication performance of HMOG features. %

\subsection{Experiment Setup and Design}
\label{sec:powerms}

We developed an Android application that collects and processes sensor data at 
different sampling rates. Our application allows us to selectively enable sensors and HMOG features. We extracted the best-performing 17 features for sitting (i.e., top-ranked 17 features in Figure~\ref{fig:fisherFeaturesSit} that were selected by 10-CV) and 13 HMOG features for walking (i.e., top-ranked 13 features in Figure~\ref{fig:fisherFeaturesWalk}, selected by 10-CV). The union of these two feature sets resulted in 18 HMOG features. Because none of these feature were extracted from magnetometer, we did not measure its energy consumption. 

 Experiments were performed using a Samsung Galaxy S4 smartphone running Android 4.4. To obtain consistent and repeatable results, we terminated all other applications 
and all Google services on the smartphone. Additionally, we switched off WiFi, 
Bluetooth, and cellular radios. The screen was turned on during the experiments. 
Automatic brightness adjustments were disabled, and brightness was set to the 
lowest level. We used the Monsoon Power Monitor ~\cite{monsoonPowerMonitor} to 
measure the phone's energy consumption.

We performed the energy consumption experiments as follows. First, we measured baseline energy 
consumption by running our application with all sensors and features disabled. 
Then, we enabled accelerometer and gyroscope, and evaluated the corresponding energy consumption.
Our application is designed to sample sensors at all supported frequencies. In the case of Galaxy S4, the available sampling rates are: 5Hz, 16Hz, 50Hz, and 100Hz. 
We used authentication scan lengths of 60 and 120 seconds. Our results report the average and standard deviation of ten experiments in each setting. 
Finally, we quantified the energy overhead of computing 18 HMOG features from sensor readings acquired during data collection.

\paragraph{Calculation of EERs at Lower Sampling Rates} We originally collected our data at 100Hz sampling rate. In order to obtain the EERs for lower sampling rates, we used downsampling. For example, to simulate 16Hz sampling rate, we choose every sixth sensor reading from the original sensor data. Then, using the downsampled data, we performed HMOG-based authentication with SM verifier for 60- and 120-second scans using the same evaluation process as in Section  \ref{sectionExperiments}. 

\subsection{Energy Consumption of HMOG Authentication}
\label{sec:pr}

\paragraph{EERs vs.~Energy Consumption} 
Figure~\ref{fig:eer_diff_rate} shows that  EERs for 16Hz 
sampling rate are comparable to those of 50Hz and 100Hz for both sitting and walking, while the EERs for 5Hz are 
considerably worse than 16Hz, 50Hz, and 100Hz. 

On the other hand, Table \ref{tbl:power_result} shows that energy overhead over the baseline is low (between 
6.2\% and 7.9\%) for 5Hz and
16Hz sampling rates and, in comparison, high (between 12.8\% and 20.5\%) for 50Hz 
and 100Hz. 
Thus, during the active authentication with HMOG, we can choose 16Hz instead of 
100Hz as the sensor sampling rate, which would lower the energy overhead of 
sensor data collection by about 60\% without sacrificing EER. 

\paragraph{Energy Consumption for Feature Computation} 
The energy overhead for computing 18 
features is very low compared to energy overhead of sensor data collection. The  
energy consumption for computing all 18 HMOG features is 0.08 joules, 
which corresponds to  0.19\% overhead for the 60-second and  0.1\% for  
120-second scans. This low overhead can be attributed to the fact that HMOG features are time-domain 
features. (As suggested by previous 
research~\cite{Krause:2005:TOP:1104998.1105279,Yan:2012:ECA:2357489.2358011}, 
computing time-domain features consumes less energy than computing frequency-domain features.)
Further, because HMOG feature computation involves simple arithmetic calculations, 
they can be processed very quickly by the smartphone's CPU (on average 37ms per feature,  in our case). %

\begin{table}[t]

\centering 
\caption{Energy consumption measurement results.}
\label{tbl:power_result}
\resizebox{1\columnwidth}{!}{
\begin{tabular}{c|llllll|}
{} & \multicolumn{6}{c}{60 Seconds Scan Length} \\
{} & {} & Baseline & 5Hz & 16Hz & 50Hz & 100Hz \\\hline
Energy & Mean & 42.7 & 45.5 & 46.1 & 48.4 & 51.5 \\
Consumption (J) & StdDev & 0.04 & 0.13 & 0.15 & 0.25 & 0.29 \\\hline
{Overhead to Baseline} & {} & N/A & 6.6\% & 7.9\% & 13.3\% & 20.5\% \\
\multicolumn{7}{c}{ }  \\ 
{} &  \multicolumn{6}{c|}{120 Seconds Scan Length} \\ 
{} & {} & Baseline & 5Hz & 16Hz & 50Hz & 100Hz \\\hline
Energy & Mean & 85.6 & 90.9 & 92.1 & 96.5 & 102.9 \\
Consumption (J) & StdDev & 0.17 & 0.21 & 0.23 & 0.30 & 0.36\\\hline
{Overhead to Baseline} & {} & N/A & 6.2\% & 7.6\% & 12.8\% & 20.1\%
\end{tabular}}
\end{table}

\begin{figure}[!htb]
\includegraphics[width=1.1\linewidth]{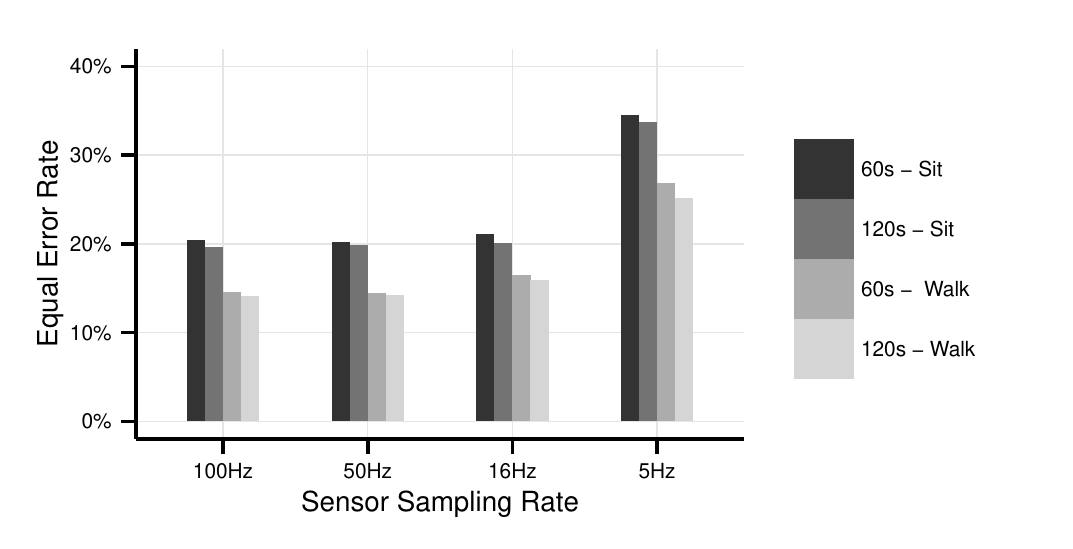}
\caption[]{Performance of HMOG features with different sensor sampling rates using SM verifier.}
\label{fig:eer_diff_rate}
\end{figure}

\begin{table*}[t]
\caption{Comparison of our authentication experiments with related work on smartphone tap/typing authentication.}
\centering
\resizebox{1\textwidth}{!}{

\begin{tabular}{| c | c | c | c | c | c | c | c | c | c | c | c | c |}
\hline
Work  & Condition        & \begin{tabular}{@{}l@{}}  Free text \end{tabular} & {Motion-based features}  &  \begin{tabular}{@{}l@{}}  Tap features \end{tabular} &  \begin{tabular}{@{}l@{}}  Keystroke \\ features \end{tabular} & \begin{tabular}{@{}l@{}}  Authentication \\ vector \end{tabular} &  \begin{tabular}{@{}l@{}} \# of \\ owners \end{tabular} & \begin{tabular}{@{}l@{}}Avg. \# of \\ impostors \\ per owner\end{tabular} & Verifier                                        & \begin{tabular}{@{}l@{}}  Training \\data source  \end{tabular}              & Best FAR                           & Best FRR              \\ \hline
Trojahn et al.~\cite{trojahn2012} & sit          & \xmark       & \xmark        &  \begin{tabular}{@{}l@{}}  pressure, \\ contact size \end{tabular} &  digraph & \begin{tabular}{@{}l@{}}  avg. of 7 samples \end{tabular}          & 35                & 34                                    & ANN                                             & {unknown}                         & 9.53\%                               & 5.88\%                  \\ \hline
Li et al.~\cite{li2013unobservable} & \multicolumn{2}{c|}{regular smartphone usage}       & \xmark        &  \parbox{2cm}{pressure, touch area, duration} &  \xmark & avg. of 2--20 gestures          & 28                & up to 47                                    & \parbox{1.5cm}{SVM (Gaussian kernel)}                                        & \begin{tabular}{@{}l@{}}owner \& \\impostor\end{tabular}                         & \multicolumn{2}{c|}{not reported for taps}                \\ \hline
Zheng et al.\cite{zheng2012}   & sit           & \xmark        & \begin{tabular}{@{}l@{}} min, max and mean of accele-\\ration and angular velocity at \\press, release of each PIN digit    \end{tabular}      &  \parbox{2cm}{pressure, contact size (both at \linebreak press and release)} &  \begin{tabular}{@{}l@{}}  key hold, \\ key interval \end{tabular} & each tap             & 80                & 79                                    & \begin{tabular}{@{}l@{}}  dissimilarity \\ score \end{tabular}                            & owner                       & \multicolumn{2}{c|}{EER = 3.65\%}             \\ \hline
Feng et al.~\cite{feng2013}    & sit        & \xmark         & \xmark       &   pressure   &  \begin{tabular}{@{}l@{}}  key hold,  \\ key interval \end{tabular} & \begin{tabular}{@{}l@{}}  5-60  char. window  \end{tabular}          & 40                & 39                                    & \begin{tabular}{@{}l@{}}  decision tree, \\ Bayesian \\ networks, \\random forest \end{tabular} & \begin{tabular}{@{}l@{}}owner \& \\impostor\end{tabular}                        & \multicolumn{2}{c|}{EER = 1\%}\\ \hline
Gascon et al.~\cite{gascon2014}   & sit           & \xmark        & \begin{tabular}{@{}l@{}} accelerometer, gyroscope, and \\ orientation features extracted\\ during  typing burst\end{tabular}       &  \xmark & \xmark & \begin{tabular}{@{}l@{}}  features extracted \\ from time window \end{tabular}            & 12                & 303                                   & linear SVM                                      & \begin{tabular}{@{}l@{}}owner \& \\impostor\end{tabular}                           & \begin{tabular}{@{}l@{}}  \parbox{1.95cm}{1\% (4 genuine  users)} \end{tabular}         & \begin{tabular}{@{}l@{}}  \parbox{1.4cm}{8\%  (4 genuine  users)} \end{tabular} \\ \hline
Bo et al.~\cite{bo2013}      & \begin{tabular}{@{}l@{}}  sit, walk \end{tabular}  & \cmark       & \begin{tabular}{@{}l@{}}  mean magnitude of acceleration\\ and angular velocity during tap  \end{tabular}        &  \begin{tabular}{@{}l@{}}  \parbox{1.9cm}{coordinate, pressure, duration} \end{tabular} &  \xmark & \begin{tabular}{@{}l@{}}  each gesture, \\ judgement after  \\ 1-13 gestures \end{tabular}            & 10                & 50                                    & SVM                                             & \begin{tabular}{@{}l@{}} owner only, \\ and owner \& \\ impostor \end{tabular} & \begin{tabular}{@{}l@{}}  0\% when trained \\with owner \& \\ impostor data, 24.99\% \\ with owner data   \end{tabular}     & \begin{tabular}{@{}l@{}}  0\% when \\ trained with \\owner data \end{tabular} \\ \hline
This work & \begin{tabular}{@{}l@{}}  sit, walk \end{tabular}  & \cmark      & \begin{tabular}{@{}l@{}}  60 resistance and 36 stability\\ features extracted from tap, \\using accelerometer, gyroscope\\ and magnetometer \end{tabular}   &  \parbox{1.95cm}{contact size (9 f.),  duration, velocity between two taps} &  \begin{tabular}{@{}l@{}}  key hold, \\ digraph \end{tabular} & \begin{tabular}{@{}l@{}} taps averaged \\ in time window \\ (20, 40, 60, 80, \\ 100, 120, 140 s.) \end{tabular}             & 90                & \begin{tabular}{@{}l@{}}99 sit \\ 93 walk \end{tabular}               &  \begin{tabular}{@{}l@{}} SM, SE, \\ 1-class SVM  \end{tabular}                           & owner                      & \multicolumn{2}{c|}{EER = 7.16\%}                  \\ \hline
\end{tabular}

}
\label{tab:relatedResearch}
\end{table*}

\section{Related Work} \label{relatedResearch}

\paragraph{Evolution of Continuous Authentication in Desktops and Mobile Phones}
The need to periodically authenticate the user after login, combined with the fact that behavioral biometric traits can be collected without interrupting the user, led to promising research in the area of continuous authentication. Early work in the field used keystroke dynamics~\cite{Gunetti:2005:KAF:1085126.1085129, Dowland:2002:KAM:647185.719834, Monrose:1997:AVK:266420.266434, dowland2002keystroke} to authenticate desktop users. Later studies on desktop users demonstrated the feasibility of using a variety of behavioral traits, including mouse dynamics~\cite{shen2013user}, soft-biometrics~\cite{5570993}, hand movement~\cite{6879297}, keyboard acoustics~\cite{6966780}, screen fingerprints~\cite{Fathy2014122},  language use~\cite{stolerman11active, pokhriyal2014use} and cognition during text production~\cite{locklear2014, monaco2012, stewart2011, monaco2013}.

Early studies in continuous authentication of mobile phone users focused on keystroke dynamics (see~\cite{clarke2007,clarke2007109,buchoux2008,maiorana2011,campisi2009}),  because these devices had a hardware keyboard to interface with the user. However, as mobile phones evolved into ``smartphones'',  research in this area has been reshaped to leverage the multitude of available sensors  on these devices (e.g., touchscreen, accelerometer, gyroscope, magnetometer, camera, and GPS). Two behavioral traits have been predominantly explored in the smartphone domain, (1) gait (see, e.g.,~\cite{frank2013,serwadda2013}), and (2) touchscreen interaction (see, e.g.,~\cite{derawi2010,vildjiounaite2006}). More recently, research has focused on  leveraging multi-modal behaviors (e.g.,~\cite{bo2013,Weidong2011}).

\paragraph{Continuous Authentication Using Taps}
Because HMOG features are collected during taps, we review existing work that uses tap activity to authenticate smartphone users. %
In Table~\ref{tab:relatedResearch}, we summarize the state-of-the-art in tap-based authentication, and highlight various aspects of each work, such as: (1) how the taps were collected---did the user compose free-text or type predefined fixed-text; (2) which body motion conditions (e.g., sitting and walking) were considered; (3) number of subjects (partitioned into owners and impostors, wherever appropriate); (4) how the verifier was trained; (5) how the authentication vector was created; and (6) the features used (e.g., motion-sensor, tap, or keystroke-based). %

Among previous papers~\cite{zheng2012,gascon2014,bo2013}, which have used motion sensors for user authentication, Zheng et al.~\cite{zheng2012} used fixed pins while Gascon et al.~\cite{gascon2014} used fixed phrases. The only work that used free-text typing and also the only one to authenticate users under walking condition is the paper by Bo et al.~\cite{bo2013}. Therefore, we believe that this is closest work to our paper, and highlight the differences between our paper and~\cite{bo2013} as follows: (1) we performed experiments on a large-scale dataset containing 100 users (90 users qualified as genuine, and  93 or more as impostors), while \cite{bo2013} used only 10 genuine users and 50 impostors (on average) from a dataset of 100 subjects. Because the genuine population size in \cite{bo2013} is too small, it is difficult to assess how accurately the reported FARs/FRRs represent the achievable authentication error rates with movement-based features, given that the number of users is a critical factor in assessing the confidence on empirical error rates of biometric systems~\cite{DassZJ06}; 
(2) we introduced and evaluated a wide range of movement features, while~\cite{bo2013} used only two (i.e., mean magnitude of acceleration and mean magnitude of angular velocity, during a gesture). Our results clearly reveal that certain types of movement features (e.g., resistance) perform better than others (e.g., stability), while~\cite{bo2013} does not distinguish between different types of movement features; (3) our evaluation is comprehensive and includes detailed comparison and fusion with additional features such as touchscreen tap and keystroke. This allowed us to report how fusion with different types of features impacted authentication and BKG performance. In contrast,~\cite{bo2013} do not compare different types of features; and (4) HMOG features performed well in both sitting and walking condition, while~\cite{bo2013} had resorted to gait features for authentication during walking.

\paragraph{Biometric Key Generation} 
To our knowledge, there is no previous work on BKG on smartphones. Here, we review some important work related to BKG in general. 

Introduced by Juels et al.~\cite{fuzzy_commitment}, BKG implemented via fuzzy commitments uses error correcting codes to construct cryptographic keys from noisy information. 
Features are extracted from raw signals (e.g., minutiae from fingerprint images); 
then, each feature is encoded using a single bit. Cryptographic keys are {\em committed} using features; subsequently, commitments are opened using biometric signals from the same users. Error-correcting techniques are 
applied to noisy biometric information in order to cope with within-user variance.%

Ballard et al.~\cite{bal08} provided a formal framework for analyzing the security of a BKG scheme, and argued that BKG should enjoy biometric privacy (i.e., biometric signals cannot be reconstructed from biometric keys) and key randomness (i.e., keys look random given their commitment). They also formalized adversarial knowledge of the biometric by introducing \emph{guessing distance}---the logarithm of the number of guesses necessary to open a commitment using feature vectors from multiple impostors.

\paragraph{Energy Consumption Analysis}
Bo et al.~\cite{bo2013} showed that energy consumption can be reduced by selectively turning off motion 
sensors based on two factors: (1) the sensitivity of the app being used---non-sensitive applications, such as games, require no authentication; and (2) the probability that the smartphone is handed to another user. This probability is calculated using historical smartphone usage data. 
In their experiments, Bo et al.~were able to turn off the sensors 30-90\% of the time, while maintaining reasonable authentication performance. However, they did not report how they performed energy consumption measurements, nor listed the  
energy consumptions associated with determining if the phone was being held by its owner or handed to another user. 

Feng et al.~\cite{Feng:2014:TCI:2565585.2565592} introduced TIPS---a continuous user authentication technique that relies on touch features exclusively. By collecting energy usage data, the authors reported average energy consumption of 88 mW, which corresponds to less than 6.2\% overhead. Like~\cite{bo2013},  Feng et al.~\cite{Feng:2014:TCI:2565585.2565592} also do not describe how energy measurements were performed.

Khan et al.~presented Itus \cite{Khan:2014:IIA:2639108.2639141}, a framework that helps Android application developers to deploy various continuous authentication mechanisms. Energy evaluation was performed using PowerTutor~\cite{PowerTutor}---an Android application that reports energy measurements performed by the smartphone. Overall energy overhead of the tested continuous authentication techniques varied between 1.2\% and 6.2\%. 

 Compared to previous research, our work provides a more complete picture of 
energy overhead of continuous authentication using HMOG. In fact, we highlighted 
the tradeoffs of energy usage for different sensor sampling rates and 
authentication scan lengths, versus authentication accuracy. In comparison to our work, existing literature did not analyze fine-grained energy consumption brought by individual components such as motion sensors. 
To our knowledge, we are the first to report 
the relationship between sensor sampling rates and continuous 
authentication accuracy.

\section{Conclusion and Future Work}
\label{sec:conclusion}

In this paper, we introduced HMOG, a set of behavioral biometric features for continuous 
authentication of smartphone users.  We evaluated HMOG from three perspectives---continuous authentication, BKG, and energy consumption. Our evaluation was performed on multi-session data collected from 100 subjects under two motion conditions (i.e., sitting and walking). Results of our evaluation can be summarized as follows. By combining HMOG with tap features, we achieved 8.53\% authentication EER during walking and 11.41\% during sitting, which is lower than the EERs achieved individually with tap or HMOG features. Further, by fusing HMOG, tap and keystroke dynamic features, we achieved the lowest EERs (7.16\% in walking and 10.05\% in sitting). Our results demonstrate that HMOG is well suited for continuous authentication of smartphone users. In fact, HMOG improves the performance of taps and keystroke dynamic features, especially during walking---a common smartphone usage scenario.  For BKG, HMOG features provide lower EER (17.4\%) compared to tap (25.7\%) and swipe features (34.2\%). Moreover, fusion of HMOG with tap features provide the best performance, with 15.1\% EER. Additionally, the energy overhead of sample collection and feature extraction is small (less than 8\% energy overhead when sensors were sampled at 16Hz). This makes HMOG well suited for energy-constrained devices such as smartphones. 
As future work, we plan to investigate how HMOG features perform under stringent constraints such as: (a) walking at higher speeds; (b) using the smartphone in different weather conditions; and (c) using applications that do not involve typing (e.g., browsing a map). Another research question of interest is \textit{cross-device interoperability}, i.e., how and to what extent can a user's behavioral biometric collected on a desktop (e.g., keystroke dynamics) be leveraged with HMOG features to authenticate the user on a smartphone (and vice versa). 

{\footnotesize
 \bibliographystyle{abbrv} 
 \bibliography{references}
}

%\begin{thebibliography}{10}
%
%\bibitem{IEEEexample:BSTcontrol}
%
%\end{thebibliography}

%
%
\newpage
\appendices

\section{Security analysis of our BKG Scheme}
\label{sec:security}

We prove that our BKG technique meets the requirements from \cite{bal08}---namely, that 
cryptographic keys are indistinguishable from random 
given the commitment ({\em key randomness}), and that given a cryptographic key and a commitment, no 
useful information about the biometric can be reconstructed ({\em biometric privacy}).
We assume that the biometric is modeled by an unpredictable function. This captures the idea 
that a user's biometric is {\em difficult to guess}. Informally, an 
unpredictable function $f(\cdot)$ is a function for which no efficient adversary 
can predict $f(x^*)$ given $f(x_i)$ for various $x_i \neq x^*$. Formally:

\begin{definition}
A function family $(\C, D, R, F)$ for $\{f_c(\cdot): D \rightarrow~R\}_{c \leftarrow \C}$ is unpredictable if for any efficient algorithm $\A$ and auxiliary information $z$ we have:
\[
	Pr[(x^*, f_c(x^*) \leftarrow \A^{f_c(\cdot)}(z) \text{\; and\; } x^* \not\in Q] \leq \operatorname{negl}(\kappa)
\]
where $Q$ is the set of queries from $\A$, $\kappa$ is the security parameter and $\operatorname{negl}(\cdot)$ is a negligible function.
\label{def:unpredictable}
\end{definition}

In order to define security of biometric key generation systems, Ballard et al.~\cite{bal08} introduced the notions of \emph{Key Randomness} (REQ-KR), \emph{Weak Biometric Privacy} (REQ-WBP) and \emph{Strong Biometric Privacy} (REQ-SBP). 
We formalize the notion of key randomness by defining Experiment $\operatorname{IND-KR}_\A(\kappa)$:

\medskip
\noindent 
{\bf Experiment $\operatorname{IND-KR}_\A(\kappa)$} 

\smallskip
\begin{enumerate}
	\item $\A$ is provided with a challenge $(\PRF_{c_i}(z|1), \delta)$, $k_b$ and $z$, where $k_0 = \PRF_{c_i}(z|0)$ and $k_1 \leftarrow_R \{0,1\}^\kappa$ for a bit $b \leftarrow_R \{0,1\}$, corresponding to user $i$.
	\item $\A$ is allowed to obtain biometric information $x_j$ for arbitrary users $j$ such that $j\neq i$.
	\item $\A$ outputs a bit $b'$ as its guess for $b$. The experiment outputs $1$ if $b = b'$, and $0$ otherwise. 
\end{enumerate}

\begin{definition}
We say that a biometric key generation scheme has the Key Randomness property if there exist a negligible function $\operatorname{negl}(\cdot)$ such that for any PPT $\A$, $\operatorname{Pr}[\operatorname{IND-KR}_\A(\kappa) = 1] \leq 1/2+\operatorname{negl}(\kappa)$.
\end{definition}

\begin{theorem} \label{thm:key_randomness}
Assuming that the $\PRF$ is a pseudo-random function family and that biometric $x = (x_1, ..., x_n)$ is unpredictable, our Fuzzy Commitment scheme has the Key Randomness property.
\label{thm:REQ-KR}
\end{theorem}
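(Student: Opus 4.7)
\noindent The plan is a game-hopping reduction that turns $\operatorname{IND-KR}_\A(\kappa)$ into an idealized experiment in which the adversary's view is statistically independent of the challenge bit $b$. In the real game $G_0$, $\A$'s view consists of the challenge $(\PRF_{c}(z|1), \delta, k_b, z)$ together with any adaptively queried biometrics $\{x_j\}_{j \neq i}$, where $c \leftarrow_R C$ and $\delta = x_i - c$. The goal is to bound $\bigl|\Pr[b = b' \mid G_0] - 1/2\bigr|$ by a negligible function of $\kappa$.

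I would take a single hop $G_0 \rightsquigarrow G_1$ in which $\PRF_c(z|0)$ (playing the role of $k_0$) and $\PRF_c(z|1)$ (embedded in the commitment $\gamma$) are replaced simultaneously by independent uniform strings $R_0, R_1 \leftarrow_R \{0,1\}^\kappa$. The reduction is to PRF security: a $G_0$-vs-$G_1$ distinguisher yields a PRF adversary $\mathcal{B}$ that queries its oracle on $z|0$ and $z|1$ and substitutes the answers for $\PRF_c(z|0)$ and $\PRF_c(z|1)$ in the simulated challenge. The subtle point is simulating $\delta = x_i - c$ without holding the hidden PRF key $c$. I would handle this by appealing to the unpredictability of $x$ (Definition \ref{def:unpredictable}): the simulator samples $\delta$ from a distribution that, by unpredictability, is computationally indistinguishable from the honest one, since any efficient detector for a mismatch would, by contrapositive, yield a predictor for $x_i$ given $\{x_j\}_{j \neq i}$, contradicting the assumption. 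In $G_1$, both $k_0 = R_0$ and $k_1$ are uniformly random and independent of the remainder of $\A$'s view, so $\Pr[b = b' \mid G_1] = 1/2$ exactly, and a triangle inequality closes the proof.

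The hardest step will be making this PRF reduction airtight given that $c$ is uniform only over the codebook $C$, not over the PRF's native key space, and that the simulator has to output $\delta$ consistently without knowing $c$. The cleanest resolution is to model $\PRF$ as keyed by elements of $C$ (so that a uniform draw from $C$ is a valid PRF key) and to package the $\delta$-simulation as a separate lemma whose proof is a direct reduction to the unpredictable-function game of Definition \ref{def:unpredictable}. Once that interface is settled, the remaining accounting of negligible quantities is routine.
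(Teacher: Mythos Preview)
Your proposal is correct and follows essentially the same route as the paper's proof sketch: a single reduction to PRF security, with the unpredictability assumption used to argue that the distinguisher can produce a $\delta$ that is computationally consistent with the hidden PRF key. The paper frames this as a direct distinguisher $\D$ rather than a game hop, and it handles your ``hardest step'' concretely by having $\D$ sample a fresh codeword $c'$ and feature vector $x'$ and set $\delta' = x' - c'$; indistinguishability of $(\delta, \PRF_c(\cdot))$ from $(\delta', \PRF_c(\cdot))$ then follows because $\delta$ and $\delta'$ have the same marginal distribution and, by unpredictability of $x$, neither leaks useful information about $c$.
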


\begin{proof}[Proof of Theorem \ref{thm:key_randomness} (Sketch)]
Because $c = x - \delta$, and $x$ is assumed to be unpredictable, $c$ is unpredictable given $\delta$. We now show that any PPT adversary $\A$ that has advantage $1/2+\Delta(\kappa)$ to win the $\operatorname{IND-KR_\A(\kappa)}$ experiment can be used to construct a distinguisher $\D$ that has similar advantage in distinguishing $\PRF$ from a family of uniformly distributed random functions.

$\D$ is given access to oracle $O(\cdot)$ that selects a random codeword $c$ and a random bit $b$, and responds to a query $q$ with random (consistent) values if $b=1$, and with $\PRF_c(q)$ if $b=0$. 
$\D$ selects a random $z$, a codeword $c'$ and a feature vector $x'$, and sets $\delta' = x' - c'$. Then $\D$ sends $\gamma' = (O(z|1), \delta')$ and $O(z|0)$ to $\A$. %

If $b=0$, then pair $(\gamma', \PRF_{c'}(z|0))$ is indistinguishable from $((\PRF_{c}(z|1), \delta), \PRF_{c}(z|0))$, because $\delta$ and $\delta'$ follow the same distribution, $c$ and $c'$ are unpredictable given $\delta$ and thus both $\PRF_{c}(\cdot)$ and $\PRF_{c'}(\cdot)$ are indistinguishable from random. 
If $b=1$, then $O(\cdot)$ is a random oracle, so $(\gamma', O(z))$ is 
indistinguishable from pair $((\PRF_c(z|1), \delta), \PRF_c(z|0))$: $c$ is 
unpredictable given $\delta$,  therefore $\PRF_c(\cdot)$ is indistinguishable from 
random. 
 
Eventually, $\A$ outputs $b'$, and $\D$ outputs  $b'$ as its guess. It is easy to see that $\D$ wins iff $\A$ wins, so $\D$ is correct with probability $1/2+\Delta(\kappa)$. Therefore,  $\Delta(\cdot)$ must be a negligible function.\end{proof}

REQ-WBP states that the adversary learns no useful information about a biometric 
signal from the commitment and the auxiliary information, while REQ-SBP states 
that the adversary learns no useful information about the biometric given 
auxiliary information, the commitment and the key.
For our BKG algorithms, REQ-SBP implies REQ-WBP. In fact, $\PRF_c(z|1)$ (which is part of the commitment) is known to the adversary, and therefore $k=\PRF_c(z|0)$ does not reveal any additional information. %
From the unpredictability of $x$, it follows that the output of $\PRF_c$ does not reveal $c$, so $\PRF_c(z|1)$ and $k$ do not disclose information about $x$.

\begin{IEEEbiography}[{\includegraphics[width=1in,height=1.25in,clip,keepaspectratio]{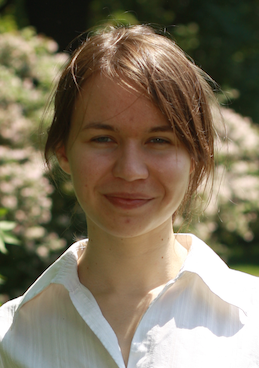}}]{Zde\v nka Sitov\'a} is a bioinformatician at the Mendel Centre for Plant Genomics and Proteomics, Masaryk University, Brno, Czech Republic.
Her research interests include data analysis, machine learning, and their application in life sciences.
She received her B.Sc. and M.Sc. degrees in Computer Science from Masaryk University, Brno, Czech Republic.
\end{IEEEbiography}

\begin{IEEEbiography}[{\includegraphics[width=1in,height=1.2in,clip,keepaspectratio]{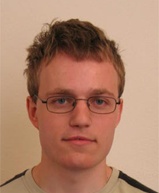}}]{Jaroslav \v Sed\v enka} is pursuing a Ph.D. degree at the Department
of Mathematics and Statistics at Masaryk University, Brno, Czech
Republic. His research interests include lattice cryptography, applied
cryptography, and algebraic number theory.
He received his M.Sc. in Mathematics and B.Sc. in Computer Science
from Masaryk University.

\end{IEEEbiography}

\begin{IEEEbiography}[{\includegraphics[width=1in,height=1.25in,clip,keepaspectratio]{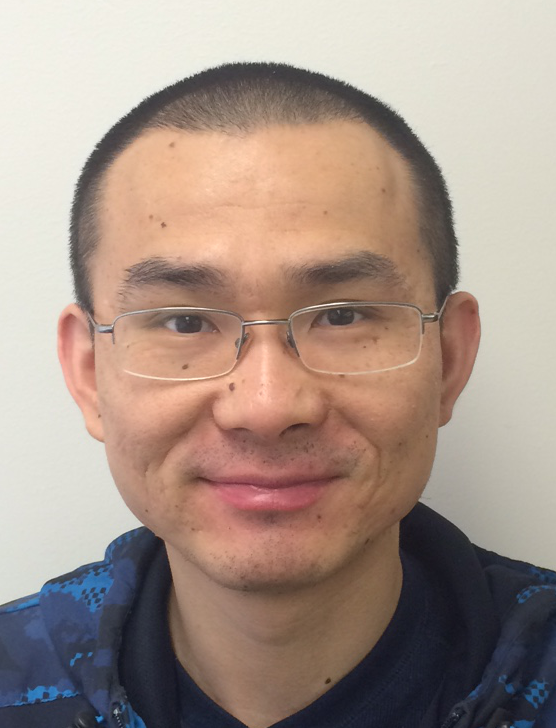}}]{Qing Yang} received his B.S from Civil Aviation University of
China in 2003 and M.S. from Chinese Academy of Sciences in 2007. Since
Fall 2011, he is a Ph.D student in the Department of Computer Science,
College of William \& Mary. His research interests are smartphone
security and energy efficiency.
\end{IEEEbiography}

\begin{IEEEbiography}[{\includegraphics[width=1in,height=1.25in,clip,keepaspectratio]{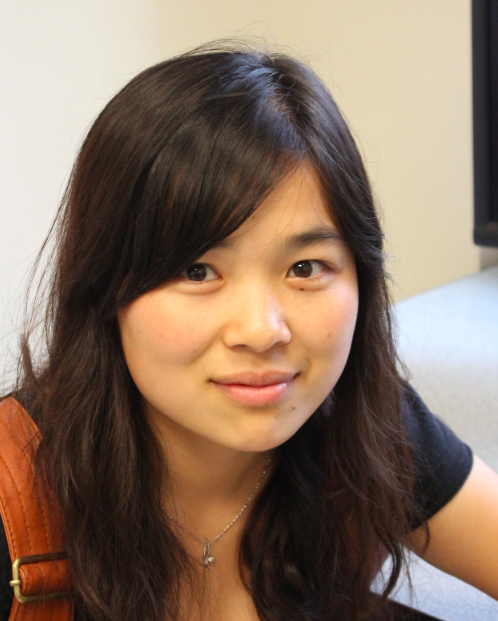}}]{Ge Peng} has been a Ph.D student in the Department of Computer Science, College of William \& Mary, since  Fall 2011. She received her B.S degree from National University of Defense Technology in 2008. Her research interests include wireless networking, smartphone energy efficiency, and ubiquitous computing.
\end{IEEEbiography}

\begin{IEEEbiography}[{\includegraphics[width=1in,height=1.25in,clip,keepaspectratio]{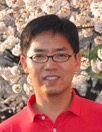}}] {Dr. Gang Zhou} is an Associate Professor in the Computer Science Department at the College of William and Mary. He received his Ph.D. degree from the University of Virginia in 2007. He has published over 70 conference and journal papers in the areas of sensors \& ubiquitous computing, mobile computing, body sensor networks, and wireless networks. The total citations of his papers are more than 4800 according to Google Scholar, among which five of them have been transferred into patents and the MobiSys'04 paper has been cited more than 800 times. He is currently serving in the Journal Editorial Board of IEEE Internet of Things as well as Elsevier Computer Networks. He received an award for his outstanding service to the IEEE Instrumentation and Measurement Society in 2008. He also won the Best Paper Award of IEEE ICNP 2010. He received NSF CAREER Award in 2013. He received a 2015 Plumeri Award for Faculty Excellence. He is a Senior Member of ACM and also a Senior Member of IEEE.
\end{IEEEbiography}

\begin{IEEEbiography}[{\includegraphics[width=1in,height=1.25in,clip,keepaspectratio]{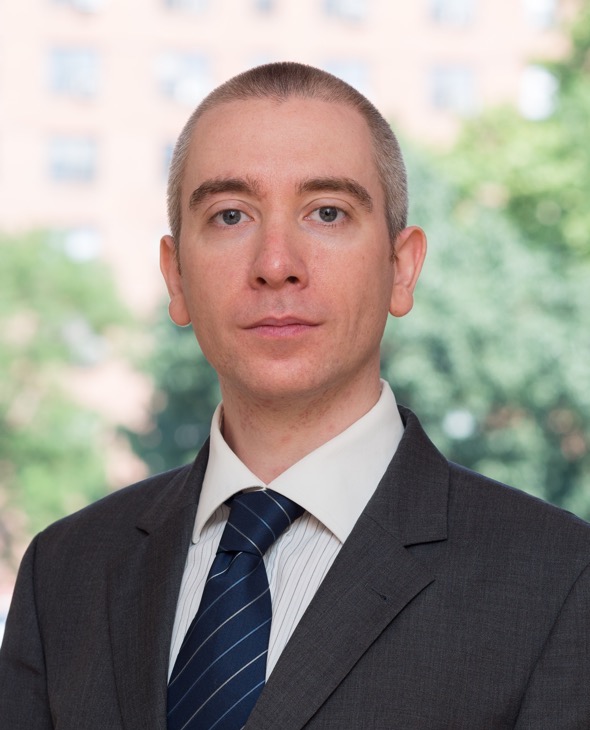}}]{Dr. Paolo Gasti} is an assistant professor of Computer Science at the New York Institute of Technology (NYIT), School of Engineering and Computing Sciences. Dr. Gasti's research focuses on behavioral biometrics, privacy-preserving biometric authentication and identification, secure multi-party protocols and network security. Before joining NYIT, he worked as a research scholar at University of California, Irvine. 
His research has been sponsored by the Defense Advanced Research Project Agency and the U.S. Air Force. He received his B.S., M.S., and Ph.D. degrees from University of Genoa, Italy. He is a Fulbright scholar, and member of IEEE.
\end{IEEEbiography}

\begin{IEEEbiography}[{\includegraphics[width=1in,height=1.25in,clip,keepaspectratio]{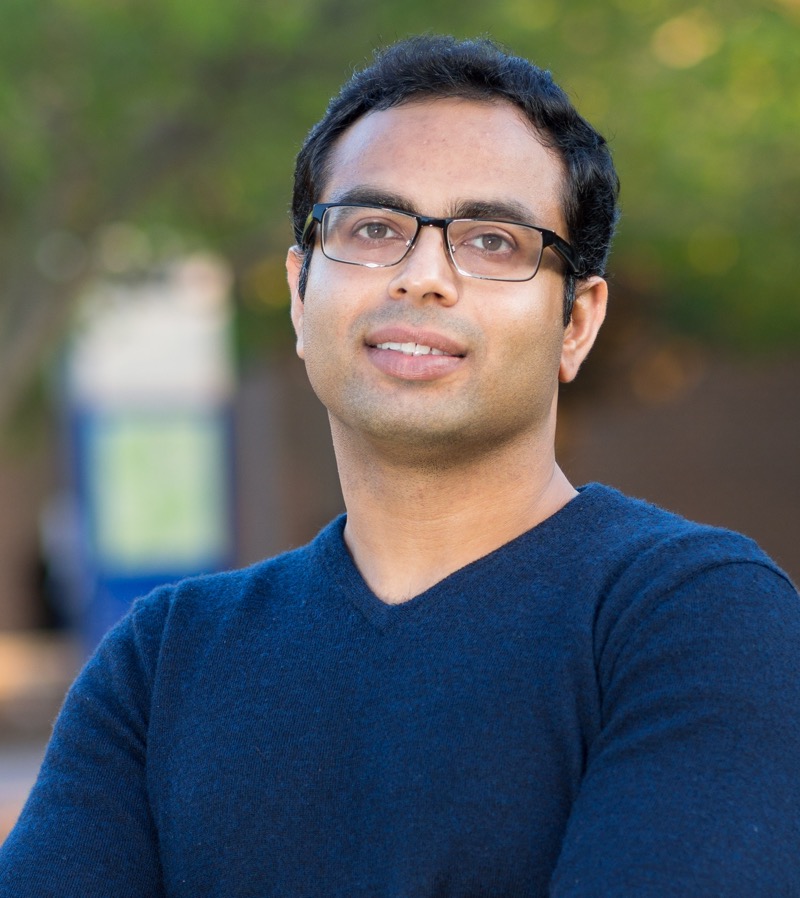}}]{Dr. Kiran S. Balagani} is an assistant professor of computer science at the New York Institute of Technology. His research interests are in cyber-behavioral anomaly detection (e.g., unauthorized user-access behaviors), behavioral biometrics, and privacy-preserving biometrics. Balagani's work has appeared in several peer-reviewed journals, including the IEEE Transactions on Pattern Analysis and Machine Intelligence, the IEEE Transactions on Information Forensics and Security, the IEEE Transactions on Knowledge and Data Engineering, the IEEE Transactions on Systems, Man, and Cybernetics, and Pattern Recognition Letters. He holds three U.S. patents in network-centric attack detection. His teaching interests include development of graduate and undergraduate courses in network security and biometrics. Balagani received the Ph.D. degree and two M.S. degrees from Louisiana Tech University; and the B.S. degree from Bangalore University, India.
\end{IEEEbiography}

\end{document}